\documentclass[11pt]{article}
\usepackage[
	includefoot,
	margin=3cm,
	a4paper,
]{geometry}
\usepackage{microtype}
\usepackage[utf8]{inputenc}
\usepackage{enumerate}
\usepackage{amsmath}
\usepackage{amsfonts} 
\usepackage{nicefrac}
\usepackage{amssymb}
\usepackage{amsthm}
\usepackage{mathtools}
\usepackage{tikz}
\usetikzlibrary{calc}
\tikzstyle{para}=[rectangle,draw=black,minimum height=.8cm,fill=gray!10,rounded corners=1mm, on grid]
\usepackage{pgfplots}
\pgfplotsset{compat=newest}
\usepackage{pgfplotstable}
\usepackage{xcolor}
\usepackage{authblk}

\tikzstyle{path} = [color=black!10,line cap=round, line join=round, line width=12pt]
\tikzset{
	colornode/.style = {
		circle,
		draw=#1!70!black,
		very thick,
		fill=#1
	}
}

\definecolor{r}{rgb}{1.0, 0.4, 0.4}
\definecolor{r0}{rgb}{1.0, 0.7, 0.4}
\definecolor{r1}{rgb}{1.0, 0.4, 0.0}
\definecolor{r2}{rgb}{0.8, 0, 0.4}
\definecolor{r3}{rgb}{1, 0.4, 0.3}
\definecolor{b}{rgb}{0.4, 0.4, 1.0}
\definecolor{b0}{rgb}{0.4, 0.7, 1.0}
\definecolor{b1}{rgb}{0.0, 0.4, 1.0}
\definecolor{b2}{rgb}{0.4, 0.0, 0.8}
\definecolor{g}{rgb}{0.2, 0.9, 0.3}
\definecolor{y2}{rgb}{1, 1, 0.3}

\DeclareRobustCommand{\tikzdot}[1]{\tikz[baseline=-0.6ex]{\node[draw,fill=#1,inner sep=2pt,circle] at (0,0) {};}}

\usepackage{tabularx}
\usepackage{booktabs}
\usepackage{paralist}

\usepackage{algorithm2e}
\DontPrintSemicolon

\usepackage[pdfdisplaydoctitle,menucolor=orange!40!black,filecolor=magenta!40!black,urlcolor=blue!40!black,linkcolor=red!40!black,citecolor=green!40!black,colorlinks]{hyperref}

\newcommand{\ExternalLink}{%
    \tikz[color=magenta, x=1.2ex, y=1.2ex, baseline=-0.05ex]{%
        \begin{scope}[x=1ex, y=1ex]
            \clip (-0.1,-0.1) 
                --++ (-0, 1.2) 
                --++ (0.6, 0) 
                --++ (0, -0.6) 
                --++ (0.6, 0) 
                --++ (0, -1);
            \path[draw, 
                line width = 0.5, 
                rounded corners=0.5] 
                (0,0) rectangle (1,1);
        \end{scope}
        \path[draw, line width = 0.5] (0.5, 0.5) 
            -- (1, 1);
        \path[draw, line width = 0.5] (0.6, 1) 
            -- (1, 1) -- (1, 0.6);
        }
}

\DeclareUnicodeCharacter{1EF3}{\`y}
\usepackage[
	backend=biber,
	isbn=false,
	url=false,
	doi=true,
	backref=true,
	hyperref=true,
	natbib=true,
	maxcitenames=3,
	maxbibnames=99,
	sortcites]{biblatex}
\renewbibmacro{in:}{\ifentrytype{article}{}{\printtext{\bibstring{in}\intitlepunct}}}
\newbibmacro{doilink}[1]{%
	\iffieldundef{doi}{%
		\iffieldundef{url}{%
			#1%
        }{%
		\href{\thefield{url}}{$\ExternalLink$}%
		}%
	}{%
		\href{https://doi.org/\thefield{doi}}{$\ExternalLink$}%
	}%
}
\DeclareFieldFormat*{doi}{\usebibmacro{doilink}{#1}}

\setcounter{biburllcpenalty}{7000}
\setcounter{biburlucpenalty}{8000}

\usepackage[nameinlink,sort&compress,capitalize,noabbrev]{cleveref}

\usepackage[backgroundcolor=gray!10,textsize=footnotesize]{todonotes}

\newtheorem{theorem}{Theorem}

\newtheorem{observation}[theorem]{Observation}
\newtheorem{corollary}[theorem]{Corollary}
\newtheorem{proposition}[theorem]{Proposition}

\theoremstyle{definition}
\newtheorem{definition}[theorem]{Definition}

\crefname{rrule}{Rule}{Rules}
\crefname{figure}{Figure}{Figures}

\newcommand{\prob}[1]{\textnormal{\textsc{#1}}}

\newcommand{\myproblem}[5]{
	\begin{center}
	\begin{minipage}{0.95\columnwidth}
		\noindent
		\prob{#1}
		\vspace{5pt}\\
		\setlength{\tabcolsep}{3pt}
		\begin{tabularx}{\textwidth}{@{}lX@{}}
			\textbf{#2}     & #3 \\
			\textbf{#4}  & #5
		\end{tabularx}
	\end{minipage}
	\end{center}
}

\newcommand{\problemdef}[3]{\myproblem{#1}{Input:}{#2}{Question:}{#3}}

\DeclarePairedDelimiterX{\abs}[1]{\lvert}{\rvert}{#1}
\DeclarePairedDelimiterX{\norm}[1]{\lVert}{\rVert}{#1}
\DeclarePairedDelimiterX{\ceil}[1]{\lceil}{\rceil}{#1}

\newcommand{\N}{\mathbb{N}}
\newcommand{\Nzero}{\N_0}

\newcommand{\calF}{\mathcal{F}}

\newcommand{\cocl}[1]{\ensuremath{\operatorname{#1}}}
\newcommand{\W}[1]{\cocl{W[#1]}}
\newcommand{\Wone}{\W{1}}

\newcommand{\NP}{\cocl{NP}}

\newcommand{\coNP}{\cocl{coNP}}
\newcommand{\XP}{\cocl{XP}}
\newcommand{\poly}{\cocl{poly}}

\newcommand{\NPincoNPslashpoly}{\ensuremath{\NP\subseteq \coNP/\poly}}

\newcommand{\bigO}{\mathcal{O}}
\newcommand{\yes}{\textnormal{\texttt{yes}}}
\newcommand{\no}{\textnormal{\texttt{no}}}

\newcommand{\NN}{\mathbb{N}}

\newcommand{\FF}{\mathbb{F}}

\newcommand{\AAA}{\mathcal{A}}

\newcommand{\III}{\mathcal{I}}

\newcommand{\SSS}{\mathcal{S}}

\DeclareMathOperator{\rep}{rep}

\newcommand{\repr}{\ensuremath{\subseteq_{\rep}}}

\DeclareMathOperator{\dist}{dist}

\newcommand{\smallbinom}[2]{\Bigl(\begin{array}{@{}c@{}}#1\\#2\end{array}\Bigr)}

\newcommand{\oneto}[1]{[ #1 ]} 

\newcommand{\ie}{i.\,e.,\ }
\newcommand{\eg}{e.g.\ }

\newcommand{\fsp}{\textnormal{\textsc{\Fair{} Shortest Path}}}

\newcommand{\abX}[3]{\ensuremath{#1}-\ensuremath{#2}\nobreakdash-#3}
\newcommand{\abpath}[2]{\abX{#1}{#2}{path}}
\newcommand{\abpaths}[2]{\abX{#1}{#2}{paths}}
\newcommand{\stpath}{\abpath{s}{t}}
\newcommand{\stpaths}{\abpaths{s}{t}}

\newcommand{\tworows}[2]{\begin{tabular}{c}{#1}\\{#2}\end{tabular}}
\newcommand{\threerows}[3]{\begin{tabular}{c}{#1}\\{#2}\\{#3}\end{tabular}}
\newcommand{\distto}[1]{\tworows{Distance to}{#1}}
\newcommand{\disttoc}[2]{\threerows{Distance to}{#1}{#2}}

\newcommand*{\defeq}{\mathrel{\vcenter{\baselineskip0.5ex \lineskiplimit0pt\hbox{\scriptsize.}\hbox{\scriptsize.}}}=}

\title{\Large \bf The Structural Complexity Landscape of Finding Balance-Fair~Shortest~Paths}

\author[1,2]{Matthias Bentert}
\author[1]{Leon Kellerhals}
\author[1]{Rolf Niedermeier}

\affil[1]{\small
	Technische Universit\"at Berlin
}

\affil[2]{\small
	University of Bergen\protect\\
	\texttt{\textrm{matthias.bentert@uib.no}}\protect\\
	\texttt{\textrm{leon.kellerhals@tu-berlin.de}}%
}

\date{}

\bibliography{strings-long,bibliography}

\newcommand{\fair}{balance-fair}
\newcommand{\Fair}{Balance-fair}

\newcommand{\col}{\ensuremath{\chi}}
\newcommand{\colors}{\ensuremath{\oneto{c}}}
\newcommand{\ncols}[1][]{\ensuremath{c}}

\begin{document}

\maketitle

\begin{abstract}
We study the parameterized complexity of finding shortest~$s$-$t$-paths with an additional fairness requirement.
The task is to compute a shortest path in a vertex-colored graph where each color appears (roughly) equally often in the solution.
We provide a complete picture of the parameterized complexity landscape of the problem with respect to structural parameters by showing a tetrachotomy including polynomial kernels, fixed-parameter tractability, XP-time algorithms (and \Wone-hardness), and para-NP-hardness. 
\end{abstract}

\section{Introduction}
Computing shortest paths in graphs is a fundamental problem in algorithmic graph theory.
In many applications, however, not every shortest path is equally valuable.
Consider a group of people traveling together.
Some members of the group may prefer traveling next to rivers, others prefer forests, and still others may prefer urban regions.
To satisfy all group members' preferences and not create any envy, the group decides to search for a \emph{fair} route, that is, a route that (roughly) encounters the same amount of river sights, forests, and urban regions.
Clearly, such a travel from a starting point~$s$ to an endpoint~$t$ can be modeled as finding an~\stpath{} in a graph.
To model the fairness aspect, the vertices (or edges\footnote{We decided for the vertex variant, but both variants are computationally equivalent.}) are colored, say blue vertices for paths next to rivers, green for forest routes, and gray for streets in a city.
The task is to find a \emph{\fair{}} path, i.e., a path in which every color appears equally often.
We provide an example with three colors in \cref{fig:introexample}.
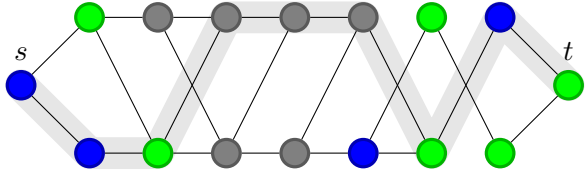
\begin{figure}[t]
\centering
\begin{tikzpicture}[scale=.9]
		\draw[path] (-4.5,0) -- (-3.5,-1) -- (-2.5,-1) -- (-1.5,1) -- (-.5,1) -- (.5,1) -- (1.5,-1) -- (2.5,1) -- (3.5,0);

		\node[colornode=blue, label=$s$] at (-4.5,0) (s) {};
		\node[colornode=green, label=$t$] at(3.5,0) (t) {};
                                                      ,
		\node[colornode=blue] at (-3.5,-1) (p11) {} edge(s);
		\node[colornode=green] at (-2.5,-1) (p12) {} edge(p11);
		\node[colornode=gray] at (-1.5,-1) (p13) {} edge(p12);
		\node[colornode=gray] at (-.5,-1) (p14) {} edge(p13);
		\node[colornode=blue] at (.5,-1) (p15) {} edge(p14);
		\node[colornode=green] at (1.5,-1) (p16) {} edge(p15);
		\node[colornode=green] at (2.5,-1) (p17) {} edge (t);
		
		\node[colornode=green] at (-3.5,1) (p21) {} edge(s) edge(p12);
		\node[colornode=gray] at (-2.5,1) (p22) {} edge(p21)  edge(p13);
		\node[colornode=gray] at (-1.5,1) (p23) {} edge(p22) edge(p12);
		\node[colornode=gray] at (-.5,1) (p24) {} edge(p23) edge(p13);
		\node[colornode=gray] at (.5,1) (p25) {} edge(p14) edge(p16) edge(p24);
		\node[colornode=green] at (1.5,1) (p26) {} edge(p15) edge(p17);
		\node[colornode=blue] at (2.5,1) (p27) {} edge(p16) edge (t);
		
	\end{tikzpicture}
	\caption{A graph with colored vertices (blue, green, and gray). The highlighted path is a shortest path between~$s$ and~$t$ and contains three vertices of each color. Thus, it is \fair{}.}
	\label{fig:introexample}	
\end{figure}%

We study the parameterized complexity landscape of finding \fair{} shortest \stpaths.

\paragraph{Related work.}
Path finding in vertex-colored graphs has been a subject of broad and 
intensive study. Here, we only point to algorithmically motivated work that seems particularly close
to our scenario of \fair{} shortest paths.
First of all, finding colorful paths (each 
color appears at most once) is an important algorithmic 
topic, both in static and in temporal graphs~\cite{alon1995color,DH21}.

\citet{CIMTP21} analyze the complexity of finding 
tropical paths (shortest and longest), that is, paths containing
each color in the input graph at least once.
Another close (and also broad) area is that of finding 
resource-constrained shortest paths~\cite{PG13}, where roughly speaking 
the desired path shall have minimum cost and only a limited consumption
of resources. Generally, the problem is NP-hard~\cite{HZ80} and 
received a lot of attention in recent years \cite{ATHK21,ford2022backtracking,Irnich2005}.
Note, however, that there are no immediate fairness aspects modeled here.
\citet{hanaka2021computing} are somewhat closer to fairness aspects in path finding.
They study shortest paths under diversity aspects,
meaning that they search for multiple shortest paths that 
are maximally different from each other. This fits into the recent trend of
finding diverse sets of 
solutions to optimization problems~\cite{BFJMOPR20,PT19}. 
Coming back to \fair{} paths, we have shown in previous work that finding \fair{} paths of length at most some given~$\ell$ is fixed-parameter tractable by~$\ell$ and \Wone-hard parameterized by the number of colors in the input graph \cite{BKN22}.
Note that searching for \fair{} shortest \stpaths{} is a special case where~$\ell$~is the distance between the two terminals~$s$ and~$t$ in the input graph. 

Finally, we only mention in passing that fairness aspects are currently investigated in all kinds of optimization problems (particularly graph-based ones), including topics such as graph-based data  clustering~\cite{AEKMMPVW20,AEKM20,FM21,FKN21}, influence maximization~\cite{KRBHJ20}, and graph mining~\cite{KT21,DMCL22}.
Fairness notions similar to \fair{ness} are studied in various contexts~\cite{AM17}.
A slightly more general scenario is considered for the classic graph problems \textsc{Vertex Cover} and \textsc{Edge Cover}~\cite{BBB21}.
Particularly close to our setting is the work by \citet{CKLV19}.
They study \fair{ness} (their definition only differs in small technical details from ours) in problems that can be solved by finding a largest independent set in a matroid or the intersection of two matroids.
Their results \eg imply that finding \fair{} spanning trees can be solved in polynomial time when the number of colors is constant.\footnote{By a result of \citet{brezovec1988matroid}, the polynomial-time solvability also holds for an unbounded number of colors.}

\paragraph{Our contributions.}
Altogether, we provide a quite complete picture of the parameterized
complexity landscape of a well-motivated fairness scenario for 
one of the back-bone problems in network algorithmics: finding 
shortest paths.
Our contribution is a complete tetrachotomy (a division into four parts) for structural parameterizations for \fsp{} (see \cref{sec:prelim} for a formal definition) depicted in \cref{parameterHierarchy}.
\begin{figure}[t]
\centering
\begin{tikzpicture}[node distance=2*0.60cm and 3.7*0.50cm, every node/.style={scale=0.65}]
\tikzstyle{pk}=[fill=green!30]
\tikzstyle{nopk}=[fill=yellow!50]
\tikzstyle{wone}=[fill=orange!30]
\tikzstyle{nph}=[fill=red!30]
\tikzstyle{res}=[very thick]
\linespread{1}
\node[para,pk] (vc) {Minimum Vertex Cover};
\node[para, xshift=4.1cm,nopk,res] (ml) [right=of vc] {\tworows{Max Leaf \#*}{\small [Prop.\;\ref{prop:pkfen}]}};
\node[para, xshift=-3.4cm,pk] (dc) [left=of vc] {\distto{Clique}};

\node[para, yshift=-5mm,xshift=-14mm,nopk,res] (mcc) [below left=of dc] {\threerows{Minimum}{Clique Cover}{\small [Prop.\;\ref{prop:nokerntd}]}}
edge (dc);
\node[para,yshift=-7mm,xshift=-13.0mm,pk] (dcc) [below= of dc] {\distto{Co-Cluster}}
edge (dc)
edge[bend left=20] (vc);
\node[para,xshift=13.8mm,yshift=-5mm,pk] (dcl) [below= of dc] {\distto{Cluster}}
edge (dc)
edge (vc);
\node[para, xshift=14.7mm,yshift=-5mm,wone] (ddp) [below=of vc] {\distto{disjoint Paths \cite{BKN22}}}
edge (vc)
edge (ml);
\node[para,xshift=2mm,yshift=-5mm,nopk,res] (fes) [below =of ml] {\threerows{Feedback}{Edge Set}{\small [Obs.\;\ref{obs:fen}]}}
edge (ml); 
\node[para,xshift=-1mm,yshift=-5mm,nph,res] (bw) [below right=of ml] {\tworows{Bandwidth}{\small [Prop.\;\ref{prop:bipartite}]}}
edge (ml);
\node[para,xshift=-19.5mm,yshift=-06mm,below=of vc,pk,res] (nd) {\threerows{Neighborhood}{Diversity}{\small [Prop.\;\ref{prop:kernd}]}}
edge (vc)
edge (dc);

\node[para,xshift=2mm,nopk] (is) [below=of mcc] {\tworows{Maximum}{Independent Set}}
edge (mcc);
\node[para,xshift=5mm,yshift=1mm,pk,res] (dcg) [below= of dcc] {\disttoc{Cograph}{\small [Prop.\;\ref{prop:kercograph}]}}
edge (dcc)
edge (dcl);
\node[para,xshift=5mm,yshift=-1mm,nph,res] (dig) [below= of dcl] {\disttoc{Interval}{\small [Prop.\;\ref{prop:interval}]}}
edge (dcl)
edge [bend right=15] (ddp);
\node[para,fill=orange,yshift=-10mm,wone,res] (fvs) [below= of ddp] {\threerows{Feedback}{Vertex Set}{\small [Obs.\;\ref{obs:fvn}]}}
edge (ddp)
edge (fes);
\node[para, xshift=3.0mm, yshift=-0mm,nopk,res] (td) [right=of ddp] {\tworows{Treedepth}{\small [Prop.\;\ref{prop:nokerntd}]}}
edge [bend right=28] (vc);
\node[para,nph] (mxd) [below= of bw] {\tworows{Maximum}{Degree}}
edge (bw);

\node[para,nopk] (ds) [below=of is] {\tworows{Minimum}{Dominating Set}}
edge (is);
\node[para, yshift=-55mm, xshift=-48mm,nph,res] (dbp) [below left= of fvs] {\disttoc{Bipartite}{\small [Prop.\;\ref{prop:bipartite}]}}
edge (fvs);
\node[para, xshift=0mm, yshift=-7mm,nph,res] (dop) [below= of fvs] {\disttoc{Outerplanar}{\small [Prop.\;\ref{prop:bipartite}]}}
edge (fvs);
\node[para, yshift=-13mm,nph] (pw) [below= of td] {Pathwidth}
edge (ddp)
edge (td)
edge (bw);
\node[para,yshift=-10mm,nph] (hid) [below= of mxd] {$h$-index}
edge (ddp)
edge (mxd);
\node[para, nph,res,xshift=2mm] (gen) [left= of hid] {\tworows{Genus}{\small [Prop.\;\ref{prop:bipartite}]}}
edge (fes);

\node[para,xshift=1.5mm,nopk] (mxdia) [below=of ds] {\tworows{Max Diameter}{of Components \cite{BKN22}}}
edge (ds)
edge[bend right=5] (td)
edge[bend right=37] (nd);
\draw (mxdia.north east) to (dcg);
\node[para, yshift=5mm,nph] (tw) [below right= of dop] {Treewidth}
edge (dop)
edge (pw);
\node[para, yshift=-15mm,nph] (acn) [below = of gen] {\tworows{Acyclic}{Chromatic \#}}
edge (hid)
edge (gen)
edge (tw);
\node[para, yshift=-2.5mm,nph] (dpl) [below = of dop] {\distto{Planar}}
edge (dop)
edge (acn);
\node[para, xshift=-15mm, yshift=2mm,nph] (clw) [below= of dpl] {Clique-width}
edge (dcg)
edge[bend right=15] (tw);
\node[para,nph,res] (avgdist) [below=of mxdia] {\threerows{Average}{Distance}{\small [Obs.\;\ref{obs:dist}]}}
edge (mxdia);
\node[para, xshift=10mm, yshift=0mm,nph] (dch) [right= of avgdist] {\distto{Chordal}}
edge (dig)
edge (fvs);
\node[para,nph] (deg) [below= of acn] {Degeneracy}
edge (acn);
\node[para, yshift=-12mm, xshift=13mm,nph] (box) [left=of clw] {Boxicity}
edge (dig)
edge (nd)
edge (acn);
\node[para,nph] (cn) [below =of dbp] {Chromatic \#}
edge (deg)
edge (dbp);
\node[para, yshift=0mm, xshift=-4mm,nph] (dpf) [left= of cn] {\distto{Perfect}}
edge (dch)
edge (dcg)
edge (dbp);
\node[para,nph] (avd) [below=of deg] {\tworows{Average}{Degree}}
edge (deg);

\node[para,nph] (mnd) [below=of avd] {\tworows{Minimum}{Degree}}
edge (avd);

\node[para,nph] (mc) [below=of cn] {\tworows{Maximum}{Clique}}
edge (cn);
\node[para, yshift=5mm, xshift= 22mm,nph] (cho) [right= of mc] {Chordality}
edge (box)
edge (dch)
edge (cn)
edge (dcg);
\node[para,nph] (gi) [below=of avgdist] {Girth}
edge (avgdist);
\end{tikzpicture}
\caption{
The relations between structural graph parameters and our respective results for \fsp{}.
An edge from a parameter~$\alpha$ to a parameter~$\beta$ below~$\alpha$ means that~$\alpha$ upper-bounds~$\beta$ (the bounds are usually linear functions; see also \cite{Sch19}).
A parameter~$k$ is marked
green (\tikzdot{green!30}) if \fsp{} admits a polynomial kernel with~$k$,
yellow (\tikzdot{yellow!50}) if it is FPT with~$k$ but presumably does not admit a polynomial kernel,
orange~(\tikzdot{orange!30}) if the respective parameterized problem is in XP and \Wone-hard, and
red (\tikzdot{red!30}) if \fsp{} is \NP-hard for constant~$k$.
Parameters without a thick border obtain their classification from parameters above or below.
}
\label{parameterHierarchy}
\end{figure}

We divide the parameters into
\begin{inparaenum}[(i)]
	\item parameters allowing for polynomial kernels,
	\item parameters that allow for FPT-time algorithms but (presumably) not for polynomial kernels,
	\item parameters for which the resulting parameterized problem is \Wone-hard but allows for XP-time algorithms, and
	\item parameters for which the resulting parameterized problem is para-NP-hard.
\end{inparaenum}

The remainder of this work is structured as follows.
We start with some preliminaries and basic definitions in \cref{sec:prelim}.
These include a formal problem definition and an overview over the parameters and tools used throughout the paper.
We then list in \cref{sec:obs} two results from our previous work \cite{BKN22} and derive observations that are relevant for our tetrachotomy.
Afterwards, we present all our results related to polynomial kernels in \cref{sec:kernel}.
We complete our tetrachotomy with two para-NP-hardness results in \cref{sec:paraNP} and conclude the paper in \cref{sec:concl}.

\section{Preliminaries}
\label{sec:prelim}
Let~$\N$ be the set of positive integers.
For~$n \in \N$, let~${\oneto{n} \defeq \{1, 2, \dots, n\}}$.

\paragraph{Graph theory.}
We use standard graph-theoretic terminology.
In particular, for an undirected graph~${G = (V, E)}$ we set~$n \defeq \abs{V}$ and~$m \defeq \abs{E}$.
For a subset~$V' \subseteq V$ of the vertices, we use~$G[V'] \defeq (V', \{e \in E \mid e \subseteq V'\})$ to denote the subgraph of~$G$ induced by~$V'$ and denote by~$G - V'$ the subgraph~$G[V \setminus V']$.
The \emph{degree}~$\deg_G(v)$ of $v$ is the number of vertices the vertex~$v$ is adjacent to in~$G$.
A \emph{path} $P$ \emph{on~$\ell$~vertices} is a graph with vertex 
set~$\{v_1, v_2, \ldots, v_{\ell}\}$ and edge 
set~${\{ \{v_i, v_{i+1}\} \mid i \in \oneto{\ell-1}\}}$.
The vertices~$v_1$ and~$v_{\ell}$ are called \emph{endpoints}.
The \emph{length} of the path is the number of edges it contains.
A \emph{connected component} in a graph is a maximal set~$V'$ of vertices such that for each pair~$u,v \in V'$ of vertices in~$V'$, there is a path in the graph with endpoints~$u$ and~$v$.

\paragraph{Problem definition.}
Let~$G = (V, E)$ be a graph with two vertices~$s$ and~$t$.
An \stpath{}~$P$ is a path in $G$ whose endpoints are~$s$ and~$t$.
We denote by~$\dist_G(s, t)$ the length of a shortest \stpath{} in $G$.
Whenever clear from context, we may drop the subscript~$G$.
Let~$\chi \colon V \to \oneto{c}$ be a vertex coloring.
For a color~$i \in \oneto{c}$, we denote by~$\chi^i \defeq \{v \in V \mid \chi(v)=i\}$ the set of vertices of color~$i$.
For a subgraph~$H$ of~$G$ and a color~$i \in \oneto{c}$, we denote by~$\chi_H$ the coloring~$\chi$ restricted to the vertices of~$H$ and by~$\chi_H^i$ the set of vertices of color~$i$ in~$H$.
We say that a path~$P$ in~$G$ is \emph{\fair} if ${\max_{i \in \oneto{c}} \abs{\chi_P^i} - \min_{i \in \oneto{c}} \abs{\chi_P^i} \le 1}$.
The problem we study in this work is defined as follows.

\problemdef{\prob{\fsp}}
{An undirected graph~$G = (V, E)$, a vertex coloring~${\chi \colon V \to \oneto{c}}$, and two vertices~$s, t \in V$.}
{Is there a \fair{} shortest~\stpath~$P$ in~$G$?}

Observe that in the introduction we motivated the task of finding a solution path containing \emph{exactly} the same number of vertices of each color.
However, if the number of vertices in a shortest \stpath{} is divisible by the number~$c$ of colors without remainder, then any \fair{} shortest \stpath{} also satisfies this stronger requirement.
We have decided for the definition above as it is a more interesting problem in the case when the number of vertices in a solution is not divisible by~$c$ without remainder.

\paragraph{Matroids and representative families.}
A pair~$M = (U, \III)$, where~$U$ is called \emph{ground set} and $\III$ is a family of subsets (called \emph{independent sets}) of~$U$,
is a \emph{matroid} if
\begin{inparaenum}[(i)]
	\item $\emptyset \in \III$,
	\item if~${A' \subseteq A}$ and~${A \in \III}$, then~${A' \in \III}$ (hereditary property), and
	\item if~${A, B \in \III}$ and~${\abs{A} < \abs{B}}$, then there is an~${e \in B \setminus A}$ such that~${A \cup \{e\} \in \mathcal I}$ (exchange property).
\end{inparaenum}
An inclusion-wise maximal independent set is a \emph{basis} of~$M$.
It follows from the exchange property that all bases of~$M$ have the same size.
This size is called the \emph{rank} of~$M$.
Let~$A$ be a matrix over a finite field~$\FF$, and let~$U$ be the set of columns of $A$.
We associate a matroid~$M = (U, \III)$ with $A$ as follows.
A set~$X \subseteq U$ is independent (i.e., $X \in \III$) if the columns in $X$ are linearly independent over $\FF$.
We say that the matroid is \emph{linear} and that~$A$ represents $M$.

We use matroids in order to compute representative families which are defined as follows.
 
\begin{definition}
	\label{def:repr}
	Given a matroid~$M = (U, \III)$ and a family~$\AAA$ of size-$p$~subsets of~$U$,
	we say that a subfamily~$\widehat\AAA$ is a~\emph{$q$-representative} for~$\AAA$ if the following holds:
	For every set~$B \subseteq U$ of size at most~$q$, if there is a set~$A \in \AAA$ with~$A \cap B = \emptyset$ and~$A \cup B \in \III$,
	then there is a set~$\widehat A \in \widehat \AAA$ with~$\widehat A \cap B = \emptyset$ and~$\widehat A \cup B \in \III$.
	We then write~$\widehat \AAA \repr^q \AAA$.
\end{definition}

\citet{fomin2016representative} gave multiple efficient recipes for the computation of representative families.
The following works for linear matroids.
Here, $\omega < 2.373$ is the matrix multiplication constant~\cite{matrixmultiplication}.

\begin{theorem}[\citet{fomin2016representative}]
	\label{thm:repr}
	Let~$M = (U, \III)$ be a linear matroid of rank~$p+q=k$ given together with its representation matrix~$A$ over a field~$\FF$.
	Let~$\AAA = \{A_1, \dots, A_t\}$ be a family of independent sets of size~$p$ and let~$w \colon \AAA \to \Nzero$ be a weight function.
	Then a $q$-representative family~$\widehat\AAA \subseteq \AAA$ for~$\AAA$ with at most~$\binom{p+q}{p}$ sets can be found in
	\[
		\bigO\Big(\smallbinom{p+q}{p}tp^{\omega} + t\smallbinom{p+q}{q}^{\omega-1}\Big)
	\]
	operations over~$\FF$.
\end{theorem}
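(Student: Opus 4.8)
The plan is to reduce the computation of a $q$-representative family to a single linear-algebraic question about the row space of a matrix, via the exterior-algebra (Plücker) viewpoint. Since $M$ has rank $k=p+q$, I would first row-reduce the representation matrix~$A$ so that it has exactly~$k$ rows; row operations and deletion of zero rows preserve the represented matroid, and this costs only polynomial time. Now every element of~$U$ is a column vector in~$\FF^k$. To each independent set~$A_i \in \AAA$ of size~$p$ I associate its \emph{Plücker vector} $s_i \in \FF^{\binom{p+q}{p}}$, whose coordinates are the $\binom{k}{p} = \binom{p+q}{p}$ many $p\times p$ minors of the $k\times p$ submatrix $A[\,\cdot\,, A_i]$ obtained by selecting $p$ of the $k$ rows. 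All the data is then captured by the $t \times \binom{p+q}{p}$ matrix~$S$ whose $i$-th row is~$s_i$.

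The key structural claim, which I expect to be the crux, is that representativeness is governed exactly by the row space of~$S$. Concretely, for disjoint sets~$A$ of size~$p$ and~$B$ of size~$q$, the set~$A\cup B$ is a basis of~$M$ precisely when $\det A[\,\cdot\,, A\cup B] \neq 0$, and by the generalized Laplace expansion (Cauchy--Binet) this determinant equals a signed bilinear pairing $\langle s_A, s_B^{\ast}\rangle$ between the Plücker vector of~$A$ and a complementary-minor vector of~$B$. I would use this to show that if $\operatorname{span}\{s_{\widehat A} : \widehat A \in \widehat\AAA\} = \operatorname{span}\{s_A : A \in \AAA\}$, then $\widehat\AAA \repr^q \AAA$. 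Indeed, given a witness~$A \in \AAA$ with $A\cap B = \emptyset$ and $A\cup B \in \III$, we have $\langle s_A, s_B^{\ast}\rangle \neq 0$; writing~$s_A$ as a linear combination of the~$s_{\widehat A}$ forces $\langle s_{\widehat A}, s_B^{\ast}\rangle \neq 0$ for some $\widehat A \in \widehat\AAA$, and a nonzero pairing automatically yields both $\widehat A \cap B = \emptyset$ (a shared column would repeat and kill the determinant) and $\widehat A \cup B \in \III$. The case $\abs{B} < q$ reduces to $\abs{B}=q$ by augmenting~$B$ inside the complement of~$A\cup B$ using the exchange property, which only shrinks the intersection and preserves independence of the restriction.

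Granting the claim, the representative family is just a row basis of~$S$: any maximal linearly independent subset of its rows spans the same space and, by the claim, is $q$-representative, while its size is at most the ambient dimension $\binom{p+q}{p}$. To respect the weight function~$w$ (needed for minimum-weight extensions), I would instead select rows greedily in nondecreasing weight order, adding~$s_i$ whenever it lies outside the span of the already chosen vectors; this is the standard matroid-greedy computation of a minimum-weight basis of the linear matroid on the rows of~$S$, and its correctness follows again from the exchange property.

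For the running time I would account for the two dominant steps. Computing the $\binom{p+q}{p}$ minors of each of the~$t$ sets, each a $p\times p$ determinant evaluated in $\bigO(p^{\omega})$ field operations, costs $\bigO\bigl(\binom{p+q}{p}\, t\, p^{\omega}\bigr)$, the first term. Extracting the (weighted) row basis of the $t \times \binom{p+q}{q}$ matrix~$S$ is a rank computation performed in $\bigO\bigl(t\,\binom{p+q}{q}^{\omega-1}\bigr)$ operations, by processing the rows in blocks of $\binom{p+q}{q}$ and reducing each block against the current basis in $\bigO\bigl(\binom{p+q}{q}^{\omega}\bigr)$ time; since there are $t / \binom{p+q}{q}$ blocks, this yields the second term. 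Summing gives the claimed bound. The main obstacle is the structural claim of the second paragraph: getting the Laplace expansion together with the disjointness and short-$B$ cases exactly right, so that ``equal row space'' is genuinely equivalent to the combinatorial condition of \cref{def:repr}.
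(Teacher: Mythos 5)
The paper does not prove this statement --- it is imported verbatim from \citet{fomin2016representative} --- and your proposal reconstructs essentially the proof given in that reference: Pl\"ucker/exterior-algebra vectors for the size-$p$ sets, the generalized Laplace (Cauchy--Binet) pairing that makes $\langle s_A, s_B^{\ast}\rangle$ equal to $\pm\det$ of the $k$ assembled columns (so a nonzero pairing simultaneously certifies $\widehat A \cap B = \emptyset$ and $\widehat A \cup B \in \III$), a weight-ordered row-basis extraction bounded by the ambient dimension $\binom{p+q}{p}$, and the same two-term running-time accounting. The argument, including the reduction of the $\abs{B} < q$ case to $\abs{B} = q$ by augmenting $A \cup B$ to a basis and invoking the hereditary property, is sound and matches the cited source's approach.
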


\paragraph{Parameterized complexity.}
Let~$\Sigma$ be a finite alphabet.
A \emph{parameterized problem} is a set of instances~$(I, k)$ where $I \in \Sigma^*$ is a problem instance from some finite alphabet~$\Sigma$ and~$k \in \N$ is the \emph{parameter}.
A parameterized problem~$L$ is \emph{fixed-parameter tractable} if $(I, k) \in L$ can be decided in~$f(k) \cdot \abs{I}^{\bigO(1)}$ time, where~$f$ is a computable function only depending on~$k$.
We call~$(I, k)$ a \emph{\yes-instance} (of~$L$) if~$(I, k) \in L$.
The class \XP{} contains all parameterized problems which can be decided in polynomial time if the parameter~$k$ is constant, that is, in $\abs{I}^{f(k)}$ time.
To show that a parameterized problem $L$ is presumably not fixed-parameter tractable, one may use a \emph{parameterized reduction} from a \Wone-hard problem to~$L$~\cite{DF13}.
A parameterized reduction from a parameterized problem $L$ to another parameterized problem~$L'$ is an algorithm satisfying the following.
There are two computable functions~$f$ and~$g$, such that given an instance~$(I, k)$ of~$L$, the algorithm computes in~$f(k) \cdot \abs{I}^{\bigO(1)}$~time an instance~$(I', k')$ of $L'$ such that $(I, k)$ is a \yes-instance if and only if $(I', k')$ is a \yes-instance and~$k' \leq g(k)$.
A \emph{kernelization} is an algorithm that, given an instance~$(I, k)$ of a parameterized problem~$L$, computes in~$|I|^{\bigO(1)}$ time an instance~$(I', k')$ of~$L$ (the \emph{kernel}) such that~$(I, k)$ is a \yes-instance if and only if~$(I', k')$ is a \yes-instance and~$\abs{I'}+k' \le f(k)$ for some computable function~$f$ only depending on~$k$.
We say that~$f$ measures the \emph{size} of the kernel.
If~$f$ is a polynomial, then we say that~$P$ admits a \emph{polynomial kernel}.
While all fixed-parameter tractable problems admit a kernel, it is not clear whether all of them admit one of polynomial size.
Indeed, assuming that \NPincoNPslashpoly, one can show that certain parameterized problems do not admit a polynomial kernel.
This can be done \eg via an OR-cross-composition or via a polynomial parameter transformation from a parameterized problem that (presumably) does not admit a polynomial kernel.
For the definition of OR-cross-compositions, we first need the following.
Given an \NP-hard problem~$L$, an equivalence relation~$\mathcal R$ on the instances of~$L$ is a \emph{polynomial equivalence relation} if
\begin{inparaenum}[(i)]
	\item one can decide for any two instances in polynomial time whether they belong to the same equivalence class, and
	\item for any finite set $S$ of instances, $\mathcal R$ partitions the set into at most $(\max_{I \in S} \abs{I})^{\bigO(1)}$ equivalence classes.
\end{inparaenum}

\begin{definition}[OR-cross-composition \cite{bodlaender2014kernel}]
	\label{def:or-cross-composition}
	Given an \NP-hard problem~$Q$, a parameterized problem~$L$, and a polynomial equivalence relation~$\mathcal R$ on the instances of~$Q$,
	an OR-cross-composition of~$Q$ into~$L$ (with respect to~$\mathcal R$) is an algorithm that takes~$q$ instances~$I_1, I_2, \dots, I_q$ of~$Q$ belonging to the same equivalence class of~$\mathcal{R}$ and constructs in time polynomial in~$\sum_{i=1}^q \abs{I_i}$ an instance~$(I, k)$ of~$L$ such that
	\begin{inparaenum}[(i)]
		\item $k$ is polynomially upper-bounded by $\max_{i \in \oneto{q}} \abs{I_i} + \log(q)$, and
		\item $(I, k)$ is a \yes-instance of~$L$ if and only if there exists an~$i \in \oneto{q}$ such that~$I_i$ is a \yes-instance of~$Q$.
	\end{inparaenum}
\end{definition}
If a parameterized problem admits an OR-cross-composition, then it does not admit a polynomial kernel unless \NPincoNPslashpoly~\cite{bodlaender2014kernel}.
A \emph{polynomial parameter transformation} from a parameterized problems $L$ to a parameterized problem~$L'$ is a parameterized reduction from $L$ to $L'$ that, given an instance~$(I, k)$ of $L$, runs in~$(\abs{I}+k)^{\bigO(1)}$ time and returns an instance~$(I', k')$ of~$L'$ such that~$k'$ is bounded from above by a polynomial in $k$.
If $L$ does not admit a polynomial kernel unless \NPincoNPslashpoly, and there is a polynomial parameter transformation from $L$ to~$L'$, then $L'$ does not admit a polynomial kernel unless \NPincoNPslashpoly~\cite{BTY11}.

\paragraph{Graph parameters and classes.}
We give an overview over the different graph parameters and graph classes used throughout the paper.
We start with the graph parameters ordered from left to right in \cref{parameterHierarchy}.
To this end, let~$G=(V,E)$ be a graph.
The \emph{minimum clique cover} of~$G$ is the minimum number of cliques needed to partition~$V$.
The \emph{maximum diameter of components} is the maximum distance~$\dist(u,v)$ between any two vertices~$u$ and~$v$ in the same connected component of~$G$.
The \emph{average distance} is analogously the average distance~$\dist(u,v)$ between two vertices~$u$ and~$v$ in the same connected component of~$G$.
The \emph{distance to~$\Pi$} for some graph class~$\Pi$ is the minimum number of vertices needed to be removed from~$G$ such that it becomes a graph in~$\Pi$.
The \emph{feedback vertex number} is the distance to forests.
The \emph{treedepth} of an unconnected graph is the maximum treedepth of any of its connected components.
The treedepth of a connected graph~$G = (V, E)$ is defined as follows.
Let~$T$ be a rooted tree with vertex set~$V$ such that if~$\{x, y\} \in E$, then~$x$ is either an ancestor or a descendant of~$y$ in~$T$.
We say that~$G$ is embedded in~$T$.
The depth of~$T$ is the number of vertices in a longest path in~$T$ from the root to a leaf.
The treedepth of~$G$ is the minimum~$t$ such that there is a rooted tree of depth~$t$ in which~$G$ is embedded.
The \emph{maximum leaf number} of~$G$ is the maximum number of leaves among all spanning trees of~$G$.
The \emph{feedback edge number} is the minimum size of a set~$F$ of edges such that removing~$F$ from~$G$ results in a forest.
The \emph{genus} of~$G$ is the minimal integer~$k$ such that~$G$ can be drawn without crossing itself on a sphere with~$k$ handles, that is, on a surface of genus~$k$.
Given an injective function~$f \colon V \to \N$, the bandwidth cost of~$f$ is defined as~$\max_{\{u,v\}\in E}|f(u) - f(v)|$.
The \emph{bandwidth} is the minimum bandwidth cost over all possible injective functions from~$V$ to~$\N$.
We say that two vertices~$u, v$ have the same \emph{type} if~$N(v)\setminus \{u\} = N(u) \setminus\{v\}$.
The \emph{neighborhood diversity} is the minimum number of sets into which~$V$ can be partitioned such that every two vertices in the same partition have the same type.

In this work we consider the following graph classes.
A \emph{cograph} is a graph in which each component has diameter at most two.
Equivalently, a cograph is a graph without induced paths of length three.
An \emph{interval graph} is a graph where each vertex can be represented by an interval of real numbers such that two vertices are adjacent if and only if their respective intervals overlap.
The vertex set of a \emph{bipartite graph} can be partitioned into two independent sets.
A set of \emph{disjoint paths} is a graph in which each connected component is a path.
An \emph{outerplanar graph} is a graph that has a planar drawing (a drawing  in the plane without edge crossings) for which all vertices belong to the outer face of the drawing. 
Finally, a graph is a \emph{cactus graph} if every edge is part of at most one cycle.

\section{Known Results and Basic Observations}
\label{sec:obs}
In this section, we adapt some previously known results and show some simple classification results.
In previous work, we have shown that a generalization of \fsp{} is fixed-parameter tractable with respect to the parameter~$\ell$~\cite[Theorem 4]{BKN22}.
In that generalization, we are not looking for a shortest~\stpath{} but for a path of length at most some given~$\ell$.
Additionally, the difference between the most and least frequent colors can be up to some constant~$\delta$.
By setting~$\ell \defeq \dist(s,t)$ and~$\delta \defeq 1$, we get that \fsp{} is fixed-parameter tractable by the distance between~$s$ and~$t$.
Now observe that any vertex~$v$ with~$\dist(s,v) + \dist(v,t) > \dist(s,t)$ cannot be part of a solution and thus can be removed from the input.
Doing so exhaustively results in a graph of diameter at most~$\dist(s,t)$.
To see this, observe that~$\dist(u,v) \leq \min(\dist(u,s)+\dist(s,v), \dist(u,t)+\dist(t,v))$ for each pair~$u,v$~of vertices.
Since the minimum is always at most the average, we get $$\dist(u,v) \leq \frac{\dist(u,s)+\dist(s,v) + \dist(u,t)+\dist(t,v)}{2} =\dist(s,t).$$
\begin{observation}
	\label{obs:l}
	\fsp{} parameterized by maximum diameter~$d$ of components can be solved in~$(n + m) \log(n) \cdot (4e^2)^{d+2} \cdot d^{\bigO(log(d))}$ time.
\end{observation}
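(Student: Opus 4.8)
The plan is to obtain this as a direct corollary of the \fpty{} of \fsp{} with respect to~$\dist(s,t)$, which we already derived above by instantiating \cite[Theorem~4]{BKN22} with~$\ell \defeq \dist(s,t)$ and~$\delta \defeq 1$; the only extra ingredient is the inequality~$\dist(s,t) \le d$.

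First I would dispose of the degenerate case: if~$s$ and~$t$ lie in distinct connected components, then no \stpath{} exists and we report a \no-instance. Otherwise~$s$ and~$t$ share a component, whose diameter is at most~$d$ by definition of the parameter, so~$\dist(s,t) \le d$. (The preprocessing described above even shows that after deleting every vertex~$v$ with~$\dist(s,v)+\dist(v,t) > \dist(s,t)$ the two quantities essentially coincide, but for the algorithm the one-sided bound~$\dist(s,t) \le d$ is all that is needed.) I would then simply run the algorithm behind \cite[Theorem~4]{BKN22} with~$\ell \defeq \dist(s,t)$ and~$\delta \defeq 1$. This decides \fsp{} exactly, since an \stpath{} of length at most~$\dist(s,t)$ is necessarily a shortest one and a color imbalance~$\max_{i}\abs{\chi_P^i} - \min_{i}\abs{\chi_P^i}$ of at most~$1$ is precisely \fair{ness}.

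It then remains only to read off the running time. The bound of \cite[Theorem~4]{BKN22} is increasing in~$\ell$, and with~$\delta = 1$ its~$\ell$-dependent factor is~$(4e^2)^{\ell+2}\cdot\ell^{\bigO(\log\ell)}$ in front of~$(n+m)\log(n)$ (the~$+2$ coming from~$\delta+1$). Substituting~$\ell = \dist(s,t)$ and using~$\dist(s,t) \le d$ together with~$4e^2 > 1$ gives~$(4e^2)^{\dist(s,t)+2} \le (4e^2)^{d+2}$ and~$\dist(s,t)^{\bigO(\log\dist(s,t))} \le d^{\bigO(\log d)}$, which is exactly the claimed time. I do not expect a genuine obstacle here: this is a corollary, and the only points deserving care are confirming that the bounded-length, bounded-imbalance formulation of \cite{BKN22} coincides with \fsp{} under~$\ell = \dist(s,t)$ and~$\delta = 1$, and that substituting the upper bound~$d$ into the monotone running-time expression is legitimate.
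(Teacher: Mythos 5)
Your proposal is correct and takes essentially the same route as the paper: instantiate the fixed-parameter algorithm of \cite[Theorem~4]{BKN22} with~$\ell \defeq \dist(s,t)$ and~$\delta \defeq 1$, and use that~$\dist(s,t) \le d$ because~$s$ and~$t$ lie in a common component of diameter at most~$d$. The paper additionally remarks that deleting all vertices~$v$ with~$\dist(s,v)+\dist(v,t) > \dist(s,t)$ shrinks the diameter down to~$\dist(s,t)$, but, as you note, this converse bound is not needed to obtain the stated running time.
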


Moreover, we have shown that \fsp{} is \Wone-hard with respect to the number~$c$ of colors \cite[Theorem 3]{BKN22}.
The reduction is from \textsc{Multicolored Clique} parameterized by solution size~$k$ and the distance~$d$ to disjoint paths in the resulting instance is linear in~$k$.
This yields the following.
\begin{observation}
	\label{obs:ddp}
	\fsp{} parameterized by distance to disjoint paths is \Wone-hard.
\end{observation}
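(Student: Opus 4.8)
The plan is to reuse the parameterized reduction from \mcc{} parameterized by solution size~$k$ to \fsp{} given in \cite[Theorem~3]{BKN22} and to complement its correctness analysis with a single structural observation about the produced instance. Recall that \mcc{} parameterized by~$k$ is \Wone-hard, that the cited reduction runs in \FPT-time, and that a parameterized reduction transfers \Wone-hardness to the target problem whenever the new parameter is bounded by a computable function of the old one. Since the reduction already certifies \Wone-hardness with respect to the number~$c$ of colors, all that remains is to show that the distance to disjoint paths of the output graph is bounded by a function of~$k$ as well; I aim for a linear bound so that the same single construction simultaneously yields both hardness results.

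Concretely, I would exhibit a \emph{modulator}: a vertex set~$D$ of size~$\bigO(k)$ whose deletion turns the constructed graph~$G$ into a disjoint union of paths. The natural candidate is the set of ``connector'' vertices at which the path is forced to branch (the junctions gluing consecutive selection gadgets, together with~$s$ and~$t$). To verify that~$D$ works I would check two things: first, that~$\abs{D}$ is linear in~$k$; second, that every connected component of~$G - D$ has maximum degree at most two and is acyclic, hence is a path. The second point reduces to the claim that all branching in~$G$ happens exactly at the connector vertices, so that after their removal each gadget collapses to internally degree-two path segments.

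The main obstacle I anticipate is the first point. A \mcc{} instance has~$\binom{k}{2}$ pairs of color classes, and the edge-verification part of such a reduction typically contributes one gadget per pair; a naive accounting would therefore suggest a modulator of size~$\Theta(k^2)$, which would still give \Wone-hardness but only with a quadratic parameter bound, weaker than the linear bound recorded in \cref{parameterHierarchy}. The crux is thus to confirm that in the construction of \cite{BKN22} the quadratically many verification gadgets hang off a \emph{linear-size backbone}: each such gadget should contribute only internal degree-two segments strung between two connector vertices that are already counted, so that the number of genuine branching vertices stays~$\bigO(k)$. Once this structural fact is established, setting~$D$ to this backbone gives~$\abs{D} = \bigO(k)$ and~$G - D$ a disjoint union of paths; hence the distance to disjoint paths of the \fsp{} instance is~$\bigO(k)$, the reduction is a valid parameterized reduction to \fsp{} parameterized by distance to disjoint paths, and the claimed \Wone-hardness follows.
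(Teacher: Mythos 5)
Your proposal matches the paper's own justification: the paper likewise derives \cref{obs:ddp} by reusing the reduction from \mcc{} parameterized by~$k$ in \cite[Theorem~3]{BKN22} and simply noting that the distance to disjoint paths of the constructed instance is linear in~$k$. Your additional care in identifying an explicit modulator and flagging that even a quadratic bound would suffice for \Wone-hardness is sound, but it is the same argument.
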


Next, we investigate the two parameters feedback vertex number and feedback edge number.
Since a set of disjoint paths is a forest, \cref{obs:ddp} also likely excludes the possibility for fixed-parameter tractability for the feedback vertex number.
We can, however, show containment in~XP.

\begin{observation}
	\label{obs:fvn}
	\fsp{} parameterized by feedback vertex number~$k$ is solvable in~$\bigO(3.460^k \cdot n^{2k+2})$ time.
\end{observation}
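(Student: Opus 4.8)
The plan is to reduce the problem to a bounded number of guesses by exploiting that deleting the feedback vertex set leaves a forest, in which shortest paths are essentially unique. First I would compute a feedback vertex set $S$ with $\abs{S} = k$; the currently fastest deterministic algorithm does this in $\bigO(3.460^k \cdot n)$ time, and this branching step is exactly the source of the factor $3.460^k$ in the claimed bound. After removing $S$, the graph $G - S$ is a forest. The key structural observation is that for any simple \stpath{} $P$, deleting the at most $k$ vertices of $S \cap V(P)$ from $P$ breaks it into at most $k+1$ subpaths, each of which lies entirely in the forest $G - S$. Since any two vertices of a forest are joined by a unique path, each such forest subpath is determined by its two endpoints alone.

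Consequently, I would enumerate, in the order in which they occur along $P$, the at most $2k+2$ endpoints of these forest subpaths; there are $\bigO(n^{2k+2})$ choices. For each choice the forest subpaths are fixed as the corresponding unique forest paths, and I discard the choice unless these subpaths are pairwise vertex-disjoint. What remains is to decide whether a fixed ordered list $F_0, F_1, \dots, F_r$ of forest subpaths (with $F_0$ beginning in $s$ and $F_r$ ending in $t$, up to the minor special cases $s \in S$ or $t \in S$) can be threaded into a single \fair{} shortest \stpath{} by inserting vertices of $S$ between consecutive subpaths. Two observations make this manageable: the total length equals $\dist(s,t)$ if and only if the number of inserted $S$-vertices equals the fixed value $\dist(s,t) + 1 - \sum_i \abs{V(F_i)}$; and both this length condition and the fairness condition $\max_{i \in \oneto{c}} \abs{\chi_P^i} - \min_{i \in \oneto{c}} \abs{\chi_P^i} \le 1$ depend only on \emph{which} vertices of $S$ are used, not on how they are ordered. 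The ordering only influences whether the pieces actually form a single simple path.

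To make this decision I would use that $\abs{S} \le k$ is small and run a Held--Karp-style dynamic program over the subsets of $S$, whose state stores the set of $S$-vertices used so far, the index of the forest subpath reached, and the current last vertex. In $\bigO^*(2^k)$ time (where $\bigO^*$ suppresses factors polynomial in $n$) this tests whether $F_0, \dots, F_r$ can be connected, in this prescribed order, into one simple \stpath{} using some subset $S' \subseteq S$; I accept the current guess if and only if some such threading exists whose subset $S'$ has colours that, together with the already-fixed colours of the forest subpaths, satisfy the fairness inequality (the correct total length being automatic once $\abs{S'}$ is as above). Correctness holds because every \fair{} shortest \stpath{} induces one of the enumerated endpoint guesses together with a valid threading, and conversely every accepted configuration spells out a genuine \fair{} shortest \stpath.

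Putting the pieces together, the running time is $\bigO(3.460^k \cdot n)$ for computing $S$ plus $\bigO(n^{2k+2})$ endpoint guesses each processed in $\bigO^*(2^k)$ time, which is $\bigO(3.460^k \cdot n^{2k+2})$ because $2^k \le 3.460^k$. I expect the connection step, rather than the forest part, to be the crux: runs of consecutive $S$-vertices mean there can be fewer forest subpaths than junctions, and fairness pins down only the \emph{multiset} of colour counts while leaving free which colours receive the surplus. Folding both the combinatorial feasibility of the threading and the fairness test into the single subset dynamic program, and dealing cleanly with the boundary cases $s \in S$ and $t \in S$, is where I would have to be most careful.
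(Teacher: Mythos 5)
Your overall strategy---compute a feedback vertex set $S$, exploit that $G-S$ is a forest so that each maximal forest subpath of the solution is the unique forest path between its endpoints, and enumerate the interface between $S$ and the forest in $n^{\bigO(k)}$ ways---is exactly the paper's, and your argument is sound. The one place you genuinely diverge is the step you yourself flag as the crux: threading the forest pieces together through $S$. You resolve it with a Held--Karp-style subset dynamic program over $S$, which works but is heavier than necessary. The paper instead observes that on a \emph{shortest} \stpath{} every vertex $v$ of the path sits at distance exactly its position from $s$, so once the subset $F'\subseteq S$ of feedback vertices on the path is guessed, their order along the path is forced (sort by $\dist(s,\cdot)$; a tie immediately invalidates the guess), and runs of consecutive $S$-vertices are pinned down the same way. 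Guessing, for each $v\in F'$, its predecessor and successor on the path ($2^k\cdot n^{2k}$ guesses in total) then determines the entire path, which is assembled and checked in linear time---no DP and no separate fairness bookkeeping over subsets. Your route buys some generality (it would survive dropping the shortestness requirement, where the distance-ordering trick fails), but it slightly overshoots the stated bound: $n^{2k+2}$ ordered endpoint tuples, each processed in $\bigO^*(2^k)$ time, gives $2^k\cdot n^{2k+2}\cdot n^{\bigO(1)}$, a polynomial factor above $\bigO(3.460^k\cdot n^{2k+2})$, whereas the paper's count is $2^k\cdot n^{2k}$ guesses times $\bigO(n+m)$ per guess. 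So either loosen your exponent to $2k+\bigO(1)$ or adopt the distance-ordering observation to eliminate the DP.
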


\begin{proof}
	We compute a feedback vertex set~$F$ of size~$k$ in~$\bigO(3.460^k \cdot k \cdot n)$~time~\cite{Iwata2021}.
	Next, we guess\footnote{Whenever we pretend to guess something, we iterate over all possibilities and consider the correct iteration.} the vertices~$F' \subseteq F$ belonging to a solution.
	We then order these vertices with respect to their distance to terminal~$s$.
	Note that this ordering is strict or otherwise the guess is wrong.
	Hence, it only remains to find the correct vertices outside of~$F$ to connect each pair of consecutive vertices in~$F'$.
	To this end, we guess the first vertex after and the last vertex before each vertex in~$F'$ in a solution path.
	Observe that~$G-F$ is a forest and thus there is a unique path in~$G-F$ between each pair of vertices in~$V \setminus F$ (or no path if the two vertices belong to different connected components).
	Thus, any guess can be extended to a unique~$s$-$t$-path in linear time.
	We return \yes{} if and only if one of these paths is a \fair{} shortest~$s$-$t$-path.
	Note that if a \fair{} shortest~$s$-$t$-path~$P$ exists, then we find it as it corresponds to the guess that considers all vertices from~$P$ that are in~$F$ or adjacent to a vertex in~$F$.
	Finally, since the number of possibilities to guess from is upper-bounded by~$2^{|F|} \cdot n^{2|F|} = 2^k \cdot n^{2k}$, the overall running time is in~$\bigO(3.460^k \cdot k \cdot n + 2^k \cdot n^{2k} \cdot (n+m)) \subseteq \bigO(3.460^k \cdot n^{2k+2})$.
\end{proof}

For the closely related feedback edge number, we can show a stronger result. 

\begin{observation}
	\label{obs:fen}
	\fsp{} parameterized by feedback edge number~$k$ is solvable in~$\bigO(2^k \cdot (n+m))$ time.
\end{observation}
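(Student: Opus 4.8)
The plan is to exploit that a feedback edge set $F$ with $\abs{F}=k$ is trivially obtainable (take any spanning forest and let $F$ be the remaining edges), and that $G-F$ is then a forest, so between any two of its vertices there is \emph{at most one} path. First I would run a breadth-first search from $s$ to obtain $\dist(s,v)$ for every vertex $v$ and the target distance $d\defeq\dist(s,t)$, compute the spanning forest $T\defeq G-F$, and preprocess $T$ (e.g.\ for lowest-common-ancestor queries) so that the unique path between two of its vertices can be reported in time proportional to its length; all of this takes $\bigO(n+m)$ time.

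The key structural observation is that every shortest \stpath{} uses some subset $F'\subseteq F$ of the feedback edges, and that fixing $F'$ already pins down a \emph{single} candidate path. Indeed, along any shortest \stpath{} the value $\dist(s,\cdot)$ strictly increases, so each used feedback edge $\{a,b\}$ must be traversed from its endpoint of smaller distance to that of larger distance (and we may discard $F'$ unless the two distances differ by exactly one). Writing $a_i$ for the lower-distance (tail) endpoint of the $i$-th used feedback edge, the tails satisfy $\dist(s,a_1)<\dist(s,a_2)<\dots$, because consecutive feedback edges are separated on the path by a (possibly trivial) forest subpath along which the distance only grows. Hence both the order in which the edges of $F'$ appear and their orientations are forced: I would sort the edges of $F'$ by the distance of their tails and connect $s$, the resulting chain of feedback-edge endpoints, and $t$ by the unique forest paths between consecutive relevant vertices. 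Since the segments between the used feedback edges contain no edge of $F$, they are genuinely forest paths, so this reconstruction reproduces exactly the sought path whenever $F'$ is chosen as its true set of feedback edges.

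The algorithm therefore iterates over all $2^k$ subsets $F'\subseteq F$; for each it reconstructs the unique candidate and tests whether it is a simple \stpath{} of length exactly $d$ (hence a shortest path) that is \fair{}, returning \yes{} as soon as one candidate passes and \no{} otherwise. Correctness holds because a \fair{} shortest \stpath{} $P$, if it exists, is reconstructed precisely when $F'$ equals its set of feedback edges, while conversely every reported candidate is a genuine \fair{} shortest \stpath{}. For the running time, sorting the at most $2k$ relevant endpoints costs $\bigO(n)$ via counting sort on distance values in $\nullto{n}$; the reconstruction is carried out edge by edge and aborted as soon as the accumulated length exceeds $d$ or a vertex repeats, so it touches $\bigO(d)=\bigO(n)$ edges; and the fairness check (counting colors along the path and comparing the maximum and minimum counts) costs $\bigO(n)$. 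Thus each subset is handled in $\bigO(n+m)$ time, yielding the claimed $\bigO(2^k\cdot(n+m))$ bound.

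The step I expect to require the most care is the uniqueness argument: one must verify that the strict monotonicity of $\dist(s,\cdot)$ along shortest paths truly forces the order \emph{and} orientation of the used feedback edges (so that a subset maps to one candidate rather than exponentially many), and that incorrect guesses---where the reconstructed walk is non-simple or longer than $d$---are safely rejected by the length-and-simplicity test instead of producing false positives. This monotonicity is exactly what lets the edge variant avoid the extra $n^{2k}$ factor needed in \cref{obs:fvn}, since the endpoints of a feedback \emph{edge} are fixed rather than freely chosen.
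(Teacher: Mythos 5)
Your proposal is correct and follows essentially the same approach as the paper: compute the feedback edge set from a spanning tree, enumerate all $2^k$ subsets $F'$, order the endpoints by distance from $s$, use uniqueness of paths in the tree to reconstruct a single candidate per subset, and check it for being a \fair{} shortest \stpath{}. Your write-up is in fact somewhat more explicit than the paper's about why the order and orientation of the used feedback edges are forced and about rejecting invalid guesses, but the underlying argument is the same.
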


\begin{proof}
	We assume without loss of generality that the input instance is connected.
	First, we compute a minimum feedback edge set~$F$ in linear time by computing a minimum spanning tree and taking all remaining edges.
	Second, we guess the set~$F' \subseteq F$ of feedback edges contained in a solution path.
	Let~$V' \defeq \bigcup_{e \in F'} e$ be the set of endpoints of edges in~$F'$.
	We sort the vertices in~$V'$ by their distance from~$s$ in linear time using bucket sort.
	Since~$G' \defeq (V,E\setminus F)$ is a tree, it holds for each pair~$u,v$ of consecutive vertices in the order that if they are not connected by an edge in~$F'$, then there is a unique~$u$-$v$-path in~$G'$.
	Hence, any guess can be extended to a unique~$s$-$t$-path in linear time.
	If any of these paths is a \fair{} shortest~$s$-$t$-path, then we return \yes, otherwise we return \no.
	Note that if a \fair{} shortest~$s$-$t$-path~$P$ exists, then we find it as it is corresponds to the guess that considers all edges from~$P$ that are in~$F$.
	Finally, since the number of possibilities to guess from is~$2^{|F|} = 2^k$, the overall running time is~$\bigO(2^k \cdot (n+m))$.
\end{proof}

We end this section with the parameter average distance.
We show that adding a large clique to the input yields para-NP-hardness for this parameter.

\begin{observation}
	\label{obs:dist}
	\fsp{} is \NP-hard even for instances where the average distance between vertices is at most~$3$.
\end{observation}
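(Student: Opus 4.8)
The plan is to give a polynomial-time many-one reduction from the unrestricted version of \fsp{}, which is \NP-hard (as witnessed, for example, by the polynomial-time many-one reduction from the \NP-hard \mcc{} problem of~\cite{BKN22} that also underlies \cref{obs:ddp}). Starting from an arbitrary instance $(G, \chi, s, t)$ of \fsp{} with $G = (V, E)$, we may assume that $G$ is connected: otherwise we delete every connected component not containing both $s$ and $t$, which changes neither the set of \stpaths{} nor the answer. The idea is then to attach one huge clique to $G$ so that the $\binom{N}{2}$ intra-clique pairs at distance $1$ drown out all original pairwise distances in the average, while the clique stays completely inert with respect to \stpaths{}.

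Formally, fix an arbitrary vertex $w \in V$, introduce a fresh clique $K$ on $N$ vertices (with $N$ a polynomial in $n$ fixed below), pick one clique vertex $k_0 \in K$, and add the single edge $\{k_0, w\}$. Call the resulting graph~$G'$, extend $\chi$ to $\chi'$ by coloring every vertex of $K$ with color~$1$, and keep the terminals $s, t$ and the number $c$ of colors. Since $\{k_0, w\}$ is a bridge, the vertex $w$ separates $K$ from $V$. Consequently no simple \stpath{} in $G'$ (recall $s, t \in V$) can visit any vertex of $K$: entering and leaving $K$ would force the path to traverse the bridge twice and hence to revisit $w$. Therefore the \stpaths{} of $G'$ are exactly those of $G$, with identical lengths and identical color multiplicities; in particular $\dist_{G'}(s,t) = \dist_G(s,t)$, and $(G', \chi', s, t)$ is a \yes-instance if and only if $(G, \chi, s, t)$ is.

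It remains to bound the average distance in $G'$. Writing $D \le n-1$ for the diameter of $G$ and splitting the vertex pairs of $G'$ into those with both endpoints in $K$ (distance $1$), both in $V$ (distance $\le D$), and one in each (distance $\le D+2$, routed through $w$ and $k_0$), the sum of all pairwise distances is at most $\binom{N}{2} + D\binom{n}{2} + (D+2)Nn$, while the number of pairs is $\binom{N+n}{2} \ge \binom{N}{2}$. Hence the average distance is at most $1 + \frac{D\binom{n}{2} + (D+2)Nn}{\binom{N}{2}}$, whose error term is $\bigO\!\big(\tfrac{Dn}{N}\big)$ and thus drops below $2$ once $N$ is a sufficiently large polynomial in $n$ (say $N = n^3$), yielding an average distance of at most $3$. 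As $K$ contributes only $\bigO(N^2)$ additional edges, the construction runs in time polynomial in the size of the original instance.

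The point requiring care is the tension between the two goals: $K$ must be large enough to pull the average distance down to $3$, yet must not interfere with any \stpath{}. Attaching $K$ through a single bridge resolves both simultaneously—the bridge kills every path through $K$, while the quadratically many unit-distance intra-clique pairs dominate the merely $\bigO(Nn)$ cross pairs. The only genuinely technical step is the averaging estimate, where one must verify that a clique of polynomial size already suffices, so that the reduction remains polynomial-time.
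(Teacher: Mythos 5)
Your proposal is correct and follows essentially the same approach as the paper: reduce from \fsp{} itself by attaching a large clique that is inert with respect to \stpaths{} and whose quadratically many unit-distance pairs dominate the average. The only cosmetic differences are that the paper attaches a clique of size $n^2$ by making all its vertices adjacent to $s$ (rather than via a single bridge to an arbitrary vertex) and bounds the average by a direct term-by-term estimate instead of your $1 + \bigO(Dn/N)$ error analysis.
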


\begin{proof}
	We provide a reduction from \fsp.
	We add a clique of~$n^2$ vertices and make each of its vertices adjacent to the terminal~$s$.
	First, note that none of these vertices can be part of any shortest~$s$-$t$-path and thus it is an equivalent instance.
	Hence, it only remains to analyze the average distance between vertices.
	To this end, we denote by~$V$ the set of the~$n$ original vertices and by~$K$ the set of the~$n^2$ new vertices.
	The maximum distance between two vertices in~$V$ is at most~$n$, the maximum distance between two vertices in~$K$ is one, and the maximum distance between a vertex in~$V$ and a vertex in~$K$ is~$n$.
	Thus, the average distance is at most
	\[ \frac{\binom{n}{2} \cdot n + \binom{n^2}{2} \cdot 1 + n \cdot n^2 \cdot n}{\binom{n^2 + n}{2}} \leq \frac{n^4 + n^4 + n^4}{n^4} = 3. \qedhere\]
\end{proof}

\section{Kernelization}
\label{sec:kernel}
In this section, we present all our results related to polynomial kernels, that is, we show a polynomial kernel for distance to cographs (and more generally for distance to~$P_h$-free graphs for any constant~$h$) and we exclude polynomial kernels for minimum clique cover and for treedepth.
We also show that a slight generalization of \fsp{} regarding a generalized fairness parameter~$\delta$ does not allow for a polynomial kernel for the parameter maximum leaf number plus~$\delta$.
We leave open whether \fsp{} parameterized by maximum leaf number admits a polynomial kernel.
Let us start by providing a polynomial kernel for \fsp{} parameterized by the neighborhood diversity.
\begin{proposition}
	\label{prop:kernd}
	\fsp{} admits a kernel containing at most~$k^2$ vertices, where~$k$ is the neighborhood diversity.
	The kernel is computable in~$\bigO(n^2 k)$ time.
\end{proposition}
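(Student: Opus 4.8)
The plan is to exploit the structure that neighborhood diversity imposes: the vertex set $V$ partitions into $k$ classes $V_1, \dots, V_k$, where each class is either a clique or an independent set, and any two classes are either fully adjacent or fully nonadjacent. The key observation is that within a single class, all vertices are interchangeable with respect to adjacency to the rest of the graph; they differ only in their color under $\chi$ and in whether they are $s$ or $t$. First I would compute the neighborhood-diversity partition in the claimed $\bigO(n^2 k)$ time (a standard computation comparing types). The goal is then to argue that within each class, we never need to keep more than a bounded number of vertices of each color, because a solution path can only pass through a class in a very limited way.

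\textbf{Bounding how often a path visits a class.}

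The central structural claim I would prove is that any shortest $s$-$t$-path $P$ visits each neighborhood-diversity class a bounded number of times, and in particular uses only a bounded number of vertices from each class. If $V_j$ is an independent set, then consecutive vertices of $P$ inside $V_j$ are impossible (they are nonadjacent), so every vertex of $V_j$ on $P$ is entered and left via neighboring classes; since a shortest path cannot revisit the ''interface'' to a class too often without creating shortcuts, the number of visits is bounded by a function of $k$. If $V_j$ is a clique, then a shortest path uses at most two vertices of $V_j$ (entering and leaving), since three vertices in a clique could be shortcut. Either way, each class contributes $\bigO(k)$ vertices to any shortest path, so across all $k$ classes a solution path uses $\bigO(k^2)$ vertices total. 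This is where the $k^2$ bound will originate.

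\textbf{The reduction rule and the kernel size.}

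Given that a solution uses at most some $f(k) = \bigO(k)$ vertices from each class, the reduction rule is: in each class $V_j$ and for each color $i \in \oneto{c}$, keep only $f(k)$ vertices of color $i$ (together with $s$ and $t$, which are kept regardless). The soundness argument is an exchange argument: if a \fair{} shortest $s$-$t$-path uses a vertex that the rule deleted, we can substitute a retained vertex of the same class and same color without changing adjacencies, path length, or the color counts, so the path remains a \fair{} shortest $s$-$t$-path in the reduced graph. For this to yield a $k^2$-vertex kernel, I would need the per-class-per-color retention to collapse: since a path uses only $\bigO(k)$ vertices of a class in total, we only ever need the colors actually appearing, and a more careful accounting (keeping one representative per color per class, bounded by the $\bigO(k)$ path-usage) should push the total to $k^2$. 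The main obstacle I anticipate is making the visit-bound tight enough: naively bounding visits per class by $\bigO(k)$ and colors by $c$ gives a kernel depending on $c$, which is not bounded by $k$. The crucial point to nail down is that we need to retain only \emph{as many vertices of each color as a path could use}, and since the path uses $\bigO(k)$ vertices per class regardless of $c$, keeping that many distinct-colored representatives per class (rather than $c$ of them) is what forces the $k^2$ total, independent of the number of colors.
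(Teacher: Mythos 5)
There is a genuine gap, and it sits exactly where you flag your own unease. The paper's proof rests on a much sharper structural fact than your ``each class contributes $\bigO(k)$ vertices'': a shortest path contains \emph{at most one} vertex of each type. The one-line argument is: if $P$ visits $u$ and later $v$ with $N(u)\setminus\{v\}=N(v)\setminus\{u\}$, then the predecessor of $u$ on $P$ is also adjacent to $v$, so $P$ can be short-cut. This holds uniformly for clique and independent-set classes, immediately gives $\ell \le k$ for the solution length, and is what makes a $k^2$ bound reachable. Your substitute claims are both weaker and not established: the independent-set case is argued only by ``cannot revisit the interface too often,'' which is not a proof of any concrete bound, and even granting $\bigO(k)$ visits per class you only get $\ell = \bigO(k^2)$, which is too lossy (see below).

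The second and more serious problem is the passage from the structural bound to the kernel size. Keeping ``$\bigO(k)$ distinct-colored representatives per class, rather than $c$ of them'' is not sound in general: when $c$ is small, a \fair{} solution may be forced to pick a vertex of one \emph{specific} color from one specific class to balance the counts, and your exchange argument (``substitute a retained vertex of the same class and same color'') only works if a vertex of that exact color was retained --- so in this regime you must keep one vertex of \emph{every} color present in each class. The paper resolves this with a case distinction driven by $\ell \le k$: if $c \le k$, keep one vertex per color per type, giving at most $k\cdot c \le k^2$ vertices; if $c > k \ge \ell$, every color appears at most once in any \fair{} shortest path, so only color-distinctness matters, and $k$ pairwise differently colored vertices per type suffice (the other $\ell-1 \le k-1$ path vertices forbid at most $k-1$ colors). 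Without the sharp bound $\ell\le k$, even this dichotomy fails: with $\ell=\bigO(k^2)$ the small-$c$ case allows $c=\bigO(k^2)$ and yields only an $\bigO(k^3)$-vertex kernel. So the missing ingredients are (i) the one-vertex-per-type shortcut argument and (ii) the $c$ versus $k$ case analysis that it enables.
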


\begin{proof}
	Let~$I = (G=(V,E),\chi,s,t)$ be an instance of \fsp{} and let~$\ell$ be the distance between~$s$ and~$t$ in~$G$.
	We start off by computing a neighborhood partition of minimum width~$k$, which can be achieved in $\bigO(n^2 k)$ time \cite{lampis2012meta}.
	The crucial observation towards obtaining the kernel now is that a shortest path does not visit two vertices~$u, v$ of the same type:
	If a path~$P$ first visits~$u$ and later~$v$, then the predecessor of~$u$ is also adjacent to~$v$; thus the path can be short-cut.
	Hence, $\ell \le k$, and it suffices to keep one vertex of each color for each type.
	If the number~$c$ of colors is at most~$k$, then this yields a kernel at most~$k^2$ vertices.
	If~$c > k$, then we derive from~$k \ge \ell$ that the sought shortest path will contain at most one vertex of each color.
	In this case, it suffices to keep~$k$ vertices of different colors for each type:
	No matter how the remaining~$\ell-1 \le k-1$ vertices of other types were chosen, there is one vertex of our type which has a different color.
	Hence, if there is a type that contains at least~$k$ vertices of different colors, we can remove the remaining vertices of that type.
	This again yields a kernel with at most~$k^2$ vertices.
	The running time is dominated by determining the neighborhood diversity.
\end{proof}

We next provide a polynomial kernel for \fsp{} parameterized by the distance to~$P_h$-free graphs for any constant~$h$.

\begin{theorem}
	\label{fp:thm:p-h-free}
	Let~$h \in \NN$.
	Then \fsp{} parameterized by the vertex deletion distance~$k$ to~$P_h$-free graphs admits a kernel with at most~$\bigO(h^{h+2}(k+1)^{h+1})$ vertices.
	The kernel can be computed in~$\bigO(n^{\omega h} + n^4)$ time.
\end{theorem}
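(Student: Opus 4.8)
The plan is to turn the instance into an equivalent \emph{layered} one, bound the number of layers using $P_h$-freeness, and then shrink the layers with a representative-family argument that protects fairness. Throughout, let $X$ be a modulator with $G-X$ being $P_h$-free, and write $A \defeq X \cup \{s,t\}$; such a modulator of size $\bigO(hk)$ can be found in polynomial time by greedily detecting an induced $P_h$ (in $n^{\bigO(h)}$ time) and deleting all $h$ of its vertices, and replacing $k$ by this $\bigO(hk)$ bound only changes the constants in the kernel size. As in the preprocessing before \cref{obs:l}, I delete every vertex $v$ with $\dist(s,v)+\dist(v,t)>\dist(s,t)$; this changes neither $\dist(s,t)$ nor the family of shortest \stpath{}s, and $G-X$ stays $P_h$-free as an induced subgraph. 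Writing $\ell \defeq \dist(s,t)$ and $L_i \defeq \{v : \dist(s,v)=i\}$, the graph now decomposes into layers $L_0=\{s\},\dots,L_\ell=\{t\}$, and every edge runs inside a layer or between consecutive layers. Since a shortest path advances the distance from $s$ by exactly one per step, the shortest \stpath{}s are \emph{exactly} the selections $s=v_0,v_1,\dots,v_\ell=t$ with $v_i\in L_i$ and $v_iv_{i+1}\in E$; any such selection is automatically an induced path, since the chosen vertices of non-consecutive layers are non-adjacent and each layer contributes a single vertex. This removes all inducedness bookkeeping.

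\textbf{Bounding the number of layers.} For any shortest \stpath{} $P$, its maximal subpaths that avoid $X$ are induced paths of $G-X$ and hence have fewer than $h$ vertices. As $P$ meets $X$ in at most $k$ vertices, it has at most $k+(k+1)(h-1)=\bigO(hk)$ vertices, so $\ell=\bigO(hk)$ whenever $s$ and $t$ lie in one component (otherwise there is no \stpath{} and we output a trivial \no-instance). Thus there are only $\bigO(hk)$ layers, and it remains to keep few vertices of $G-X$.

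\textbf{Shrinking the layers.} I keep $A$ and, for the rest, reason with \emph{connectors}: induced paths of $G-X$ on at most $h-1$ vertices both of whose endpoints have a neighbor in $A$. On a solution the connectors are precisely the maximal $X$-avoiding subpaths, each flanked by two anchors of $A$ and, by the layering, occupying at most $h-1$ consecutive layers; crucially, a connector interacts with the rest of the path only through its two flanking anchors, because all adjacencies are confined to consecutive layers. I group connectors by their \emph{signature} — the ordered pair of flanking anchors together with the interval of layers occupied — of which there are $\bigO((k+2)^2\cdot hk)$. Within each signature I invoke \cref{thm:repr} with a linear matroid encoding the per-color budgets forced by fairness (a partition matroid whose capacity for color $i$ is the target count $\lceil(\ell+1)/c\rceil$), treating each connector as an independent set of size at most $h-1$; this retains only a number of connectors bounded in terms of $h$ and $k$ and lets me delete every $G-X$-vertex surviving in no retained connector. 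Multiplying the signature count by the per-signature bound and by the at most $h-1$ vertices per connector yields the claimed $\bigO(h^{h+2}(k+1)^{h+1})$ vertices; enumerating all induced paths on at most $h-1$ vertices costs $n^{\bigO(h)}$ and the representative-family computations contribute the $n^{\omega h}$ term, with $n^4$ covering the distance preprocessing.

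\textbf{Main obstacle.} The crux is the correctness of the connector replacement, which I would establish with an exchange lemma: if a fair shortest \stpath{} uses a discarded connector $Q$ of signature $\sigma$, then the retained family for $\sigma$ contains a connector $Q'$ that (i) keeps the path shortest — immediate, as $Q'$ spans the same layers and attaches to the same anchors — and (ii) keeps it fair. Part (ii) is exactly the defining property of a $q$-representative family in \cref{def:repr}, with the complement set $B$ being the remaining path vertices (so $\abs{B}\le\ell=\bigO(hk)$); the care needed is that fairness is two-sided, since $\max_i\abs{\chi_P^i}-\min_i\abs{\chi_P^i}\le 1$ bounds the counts both above and below, whereas a matroid models only the upper bounds, so I would split into the cases $\lceil(\ell+1)/c\rceil=\lfloor(\ell+1)/c\rfloor$ and otherwise and argue the lower bounds separately. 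The second delicate point is showing that representatives chosen independently for the different signatures still concatenate into one path along the whole anchor sequence; guaranteeing this \emph{simultaneous} replacement, rather than swapping one connector while the others stay fixed, is the step I expect to demand the most work, and is also why the signature must be fine enough to pin down the layers yet coarse enough — thanks to the locality granted by the layering — to avoid a $2^{\abs{X}}$ blow-up.
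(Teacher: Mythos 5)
Your overall architecture (an $\bigO(hk)$-size approximate modulator, the bound $\ell=\bigO(hk)$ via the diameter of components of $G-X$, decomposing a solution into modulator-avoiding segments between anchor pairs, and an iterative exchange argument to re-route one segment at a time) matches the paper's proof. The genuine gap is the step you yourself flag as delicate but do not resolve: you apply \cref{thm:repr} with a partition matroid whose per-color capacity is $\lceil(\ell+1)/c\rceil$, and this does not certify fairness of the replacement whenever that capacity exceeds one. Representative families only guarantee that the swapped-in connector $\widehat{A}$ satisfies $\widehat{A}\cap B=\emptyset$ and $\widehat{A}\cup B\in\III$, i.e.\ an \emph{upper} bound on each color count; fairness additionally requires every color to appear at least $\lfloor(\ell+1)/c\rfloor$ times, and nothing in \cref{def:repr} prevents the representative from under-using a color that the discarded connector supplied. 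Your proposed remedy --- splitting on whether $\lceil(\ell+1)/c\rceil=\lfloor(\ell+1)/c\rfloor$ and ``arguing the lower bounds separately'' --- is not carried out and does not obviously work: even when the two values coincide, the matroid still only enforces upper bounds. Moreover, the requirement $\widehat{A}\cap B=\emptyset$ is disjointness of ground-set elements, which only translates into the color-disjointness you need when each color class has capacity one.

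The paper's resolution is a case distinction you are missing. If $c>\ell$, every color appears at most once on a solution, so fairness degenerates to ``all colors distinct''; the partition matroid with capacity one captures this exactly, the lower bound is vacuous, and your iterative replacement then goes through essentially as you describe (your worry about simultaneous replacement is unfounded: replacing segments one at a time, with $B$ taken to be the colors of the current path outside the segment, is the standard and correct way to do it). If instead $c\le\ell$, the paper uses \emph{no} matroid machinery at all: each anchor-to-anchor segment has fewer than $h$ internal vertices, so there are at most $c^{h-1}\le(h(k+1))^{h-1}$ distinct color multisets per anchor pair, and keeping one segment per multiset already yields the claimed kernel size while trivially preserving both directions of fairness. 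Without some substitute for this case, your argument does not establish the kernel when colors must appear multiple times.
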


\begin{proof}
	Let~$I=(G=(V,E),\chi,s,t)$ be an instance of \fsp{}.
	Let~$\ell$ be the number of vertices in a shortest $s$-$t$-path, that is one more than the distance between~$s$ and~$t$ in~$G$.
	Let~$K' \subseteq V$ be a set of~$k$ vertices such that~$K'$ is~$P_h$-free.
	Note that each connected component of~$G - K'$ has diameter at most~$h-2$, thus any shortest~$s$-$t$-path can contain at most~$h-1$ vertices from any such connected component.
	Thus, at least each~$h$\textsuperscript{th} vertex on a shortest~$s$-$t$-path has to be contained~$K$.
	Hence, we have~$\ell \leq h(k+1)$.
	Next, we compute a set~$K \subseteq V(G)$ with~$|K| \le hk$ such that~$G-K$ is~$P_h$-free in~$\bigO(k \cdot n^h)$ time.
	We can find an induced~$P_h$ in~$\bigO(n^h \cdot h^2)$ time and any vertex deletion set to~$P_h$-free graphs contains at least one of the~$h$ vertices.
	Taking all $h$ vertices into the set~$K$ is therefore an~$h$-approximation.
	For convenience, we will also add~$s$ and~$t$ to~$K$.
	Next, we compute all shortest paths between any pair~$u, v \in K$ which do not contain any other vertices from~$k$.
	Each such path~$P$ visits the same number~$d_{uv} \defeq \dist_G(u, v) - 1 \le h-1$ of vertices, all of which are in some connected component of~$G-K$.
	We replace the graph~$G-K$ by the union of all paths computed and we remove duplicate paths, that is, if there are two~$u$-$v$-paths which use the same multiset of colors, then we remove one of them.
	Let~$\SSS_{uv}$ be the set of remaining~$u$-$v$-paths.
	
	We now make a case distinction on the number~$c$ of colors in the input graph.
	If~$c \leq \ell$, then note that~$|\SSS_{uv}| \leq c^{h-1} \leq (h(k+1))^{h-1} = h^{h-1} (k+1)^{h-1}$.
	Since each path in~$\SSS_{uv}$ contains~$d_{uv} < h$ vertices, the entire remaining graph contains at most~${|K| + |K|^2 h^h (k+1)^{h-1} \in \bigO(h^{h+2}(k+1)^{h+1})}$ vertices.
	
	If~$c > \ell$, then each color can appear at most once in any solution as by the pigeonhole principle, at least one color does not appear at all.
	Hence, for each pair~$u,v \in K$, we remove any path from~$\SSS_{uv}$ that contains the same color twice.
	Note that this implies~$|\SSS_{uv}| \leq n^h$.
	Finally, we use the technique of representative families by Fomin et al.~\cite{fomin2016representative} to shrink the size of each set~$\SSS_{uv}$ even further.
	Let~$M = (V, \III)$ be a matroid, where the set~$V$ of vertices in~$G$ is the ground set and a set~$X \subset V$ is independent in~$M$ if and only if each color appears at most once in~$X$ and~$|X| \leq \ell$.
	More formally, the set~$\III$ is defined as follows.
	\begin{equation*}
		\III = \{X \subseteq V \mid \abs{X \cap \col^i} \le 1 \text{ for } i \in \colors \text{ and } |X| \le \ell\}.
	\end{equation*}
	Recall that~$\chi^i$ is the set of vertices with color~$i$.
	Note that~$M$ is a partition matroid of rank~$\ell$;
	therefore, we can compute a linear representation over a field of size~$\bigO(n)$ in~$\bigO(n^4)$ time \cite{Marx09}.
	We can then use \cref{thm:repr} (with a uniform weight function) to compute, for each set~$\SSS_{uv}$, a~$(\ell-d_{uv})$-representative family~$\hat{\SSS}_{uv}$ of size at most~$\binom{\ell}{d_{uv}} \leq \ell^{h-1}$ in~$\bigO(n^h \ell^{h(\omega-1)}) \subseteq \bigO(n^{\omega h})$~time, where~$\omega$ is the fast matrix multiplication constant.
	We return the kernel that contains all vertices in~$K$ and all paths in~$\hat{S}_{uv}$ for each pair~$u,v\in K$ (which might be a single edge if~$\{u,v\} \in E$).
	Note that the number of vertices in the kernel is at most
	\[
		|K| + |K|^2 \ell^{h-1} h \leq kh+2 + (kh+2)^2 (h(k+1))^{h-1} h \in \bigO(h^{h+2}(k+1)^{h+1}).
	\]
	
	It remains to show that the returned instance is equivalent to the input instance.
	In the first direction, note that any path in the kernel corresponds to a path in the original graph that collects the same multiset of colors.
	Moreover, such a path is a shortest~$s$-$t$-path in the kernel if and only if it is a shortest~$s$-$t$-path in the original instance.
	For the reverse direction, let~$P$ be some balance-fair shortest~$s$-$t$-path in~$G$.
	If~$c \leq \ell$, then~$P$ directly corresponds to a path in the kernel that uses the same number of vertices and the same multiset of colors.
	If~$c > \ell$, then let~$L=\{s=v_0,v_1,\ldots,v_p,t=v_{p+1}\} \subseteq K$ be the set of vertices in~$K$ that appear in~$P$.
	We assume that the vertices are ordered in such a way that~$v_i$ appears earlier than~$v_j$ in~$P$ whenever~$i < j$.
	We iteratively replace subpaths of~$P$ by paths in the kernel.
	To this end, let~$P_0=P$ and for each~$1 \leq i \leq p+1$, let~$P_i$ be constructed as follows.
	Let~$Q_i$ be the set of colors in~$P_{i-1}$ that do not include the set of vertices in the path between~$v_{i-1}$ and~$v_i$ ($Q_i$ contains the colors of~$v_{i-1}$ and~$v_i$).
	Let~$S_i$ be all remaining colors in~$P_{i-1}$.
	By definition, there is a set~$S \in \hat{\SSS}_{v_{i-1}v_i}$ such that~$S \cup Q_i \in \III$, that is, there is a path in~$\SSS_{v_{i-1}v_i}$ whose colors are disjoint from~$Q_i$.
	We replace the subpath of~$P_{i-1}$ between~$v_{i-1}$ and~$v_i$ by the subpath with colors in~$S$ to get path~$P_i$.
	Note that~$P_{p+1}$ is then a shortest~$s$-$t$-path in the kernel that uses each color at most once.
	This concludes the proof.
\end{proof}

Since cographs are exactly the~$P_4$-free graphs, this implies a polynomial kernel for distance to cographs.

\begin{corollary}
	\label{prop:kercograph}
	\fsp{} admits a kernel of size~$\bigO(k^5)$, where~$k$~is the distance to cographs.
	The kernel can be computed in~$\bigO(n^{10})$ time.
\end{corollary}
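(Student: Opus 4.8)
The plan is to obtain this as a direct specialization of \cref{fp:thm:p-h-free} at the fixed constant $h = 4$. First I would invoke the characterization recalled immediately above the statement: cographs are precisely the $P_4$-free graphs, \ie the graphs containing no induced path on four vertices. Consequently the vertex deletion distance to cographs coincides with the vertex deletion distance to $P_4$-free graphs, so \cref{fp:thm:p-h-free} applies verbatim with $h = 4$ and the same parameter~$k$.

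It then remains only to specialize the two bounds. Substituting $h = 4$ into the kernel-size bound $\bigO(h^{h+2}(k+1)^{h+1})$ yields $\bigO(4^{6}(k+1)^{5})$; since $h$ is a fixed constant, the factor $4^{6}$ is absorbed into the $\bigO$-notation, giving $\bigO((k+1)^5) = \bigO(k^5)$. Substituting $h = 4$ into the running-time bound $\bigO(n^{\omega h} + n^4)$ yields $\bigO(n^{4\omega} + n^4)$; using $\omega < 2.373$ we have $4\omega < 9.492 < 10$, so the dominant term $n^{4\omega}$ lies in $\bigO(n^{10})$ and the additive $n^4$ is negligible, giving the claimed $\bigO(n^{10})$ overall.

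I do not expect any genuine obstacle, since the corollary is a pure instantiation of the general kernel. The only points worth double-checking are that the constant $h^{h+2}$ truly disappears into the $\bigO$ because $h$ is fixed (so the dependence on~$k$ is the clean $(k+1)^{h+1}$ and hence $\bigO(k^5)$), and that the numerical value of the matrix-multiplication exponent is small enough that $n^{4\omega}$ stays below $n^{10}$---which holds comfortably given $\omega < 2.373 < 2.5$.
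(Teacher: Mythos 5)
Your proposal matches the paper's argument exactly: the corollary is obtained by instantiating \cref{fp:thm:p-h-free} with $h=4$ via the fact that cographs are precisely the $P_4$-free graphs, and the substitutions $4^6(k+1)^5 = \bigO(k^5)$ and $n^{4\omega} \in \bigO(n^{10})$ are the intended ones. No issues.
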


The maximum diameter of components is a parameter directly under the distance to cographs in the parameter hierarchy (see \cref{parameterHierarchy}).
As shown in \cref{obs:l}, \fsp{} parameterized by maximum diameter of components is fixed-parameter tractable.
We next show that it presumably does \emph{not} allow for a polynomial kernel by excluding polynomial kernels for \fsp{} parameterized by two parameters upper-bounding maximum diameter of components---treedepth and minimum clique cover.
We first show that \fsp{} parameterized by treedepth admits an OR-cross-composition.

\begin{proposition}
	\label{prop:nokerntd}
	\fsp{} parameterized by treedepth does not admit a polynomial kernel unless~\NPincoNPslashpoly.
\end{proposition}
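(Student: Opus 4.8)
The plan is to give an OR-cross-composition (\cref{def:or-cross-composition}) from \fsp{} into itself parameterized by treedepth; since \fsp{} is \NP-hard, the existence of such a composition rules out a polynomial kernel unless \NPincoNPslashpoly. As the polynomial equivalence relation~$\mathcal R$ I would declare two instances equivalent exactly when they use the same number~$c$ of colors and have the same terminal distance~$D \defeq \dist(s,t)$, and collect all instances without an $s$-$t$-path (or otherwise malformed) in one extra junk class mapped to a fixed \no-instance. Since~$c, D \le \max_i \abs{I_i}$, this produces only polynomially many classes and equivalence is decidable in polynomial time. Given~$q$ instances~$I_1, \dots, I_q$ with~$I_j = (G_j, \chi_j, s_j, t_j)$ from one non-junk class, all sharing the same~$c$ and~$D$, I would build a single instance as follows (assuming~$c \ge 2$; the case~$c=1$ is trivial, as then every shortest path is \fair{}, so every valid instance is a \yes-instance).

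Take the disjoint union of~$G_1, \dots, G_q$. Add a \emph{prefix path}~$s^* = x_0 - x_1 - \dots - x_{c-2}$ and make its last vertex~$x_{c-2}$ adjacent to every~$s_j$; add one further vertex~$t^*$ and make it adjacent to every~$t_j$. Color the~$c$ \emph{connective} vertices~$x_0, \dots, x_{c-2}, t^*$ bijectively with the~$c$ colors (so that each color occurs exactly once among them), keep the original colorings inside every~$G_j$, and set the terminals to~$s^*$ and~$t^*$. The decisive feature is that the connective vertices contribute \emph{exactly one} vertex of each color to any $s^*$-$t^*$-path that passes through them.

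Two claims would then need checking, a structural and a fairness one. Structurally, because the prefix is pendant and the graphs~$G_j$ are pairwise disjoint and touch the remaining graph only through~$x_{c-2}$ (at their~$s_j$) and through~$t^*$ (at their~$t_j$), every simple $s^*$-$t^*$-path must traverse the entire prefix, enter a single~$G_j$ at~$s_j$, and leave it at~$t_j$. Hence~$\dist_H(s^*,t^*) = (c-1) + D + 1$, and the shortest $s^*$-$t^*$-paths are precisely the prefix, followed by a shortest $s_j$-$t_j$-path inside one single~$G_j$, followed by~$t^*$. For fairness, such a path collects the colors of that shortest $s_j$-$t_j$-path plus exactly one extra vertex of each color; adding the same constant to every color count leaves~$\max_i \abs{\chi_P^i} - \min_i \abs{\chi_P^i}$ unchanged, so the composed path is \fair{} if and only if the underlying $s_j$-$t_j$-path is \fair{}. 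Consequently~$H$ has a \fair{} shortest $s^*$-$t^*$-path if and only if some~$I_j$ is a \yes-instance, which is exactly the required OR behavior.

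Finally I would bound the parameter: deleting the~$c$ connective vertices leaves~$\bigsqcup_j G_j$, whose treedepth is~$\max_j \mathrm{td}(G_j) \le \max_j \abs{V(G_j)}$, so~$\mathrm{td}(H) \le c + \max_j \abs{V(G_j)}$, which is polynomial in~$\max_i \abs{I_i}$; the construction is clearly computable in time polynomial in~$\sum_i \abs{I_i}$. Invoking that OR-cross-compositions exclude polynomial kernels unless \NPincoNPslashpoly{} then finishes the proof. The main obstacle — and essentially the only genuine design choice — is making fairness transfer cleanly through the instance-selection gadget: the naive single-source/single-sink union would attach~$s^*$ and~$t^*$ using only two colors and hence perturb the color balance unevenly, breaking the equivalence; the prefix of~$c-1$ extra colored vertices is exactly what shifts \emph{every} color count by the same amount, reducing global fairness to per-instance fairness. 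Keeping this gadget to only~$c$ vertices is simultaneously what keeps the treedepth (rather than merely a weaker width parameter) small.
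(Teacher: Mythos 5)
Your proof is correct, and it follows the same high-level strategy as the paper: an OR-cross-composition of \fsp{} into itself, with the same polynomial equivalence relation (same number of colors, same terminal distance) and the same skeleton of attaching all sources to a new source-side gadget and all sinks to a new sink $t^*$. The genuine difference lies in the padding gadget that makes fairness transfer. The paper introduces a \emph{fresh} color $c+1$ and appends a monochromatic path of length $\max(x'-1,1)$ with $x'=\lceil \ell/c\rceil$, chosen so that the new color appears exactly $x'$ times, i.e., exactly the maximum multiplicity a color may attain in a \fair{} inner path; this forces a case distinction ($x'\le 1$ versus $x'>1$, with a second fresh color $c+2$ in the first case) and a slightly delicate counting argument in the backward direction. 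You instead add exactly one vertex of each of the $c$ \emph{existing} colors along the connective prefix-plus-$t^*$, so that every color count is shifted by the same constant and $\max_i\abs{\chi_P^i}-\min_i\abs{\chi_P^i}$ is preserved verbatim; fairness of the composed path is then equivalent to fairness of the inner path with no case analysis at all. Both yield comparable parameter bounds ($x'+1+\max_j \mathrm{td}(G_j)$ versus $c+\max_j \mathrm{td}(G_j)$, each at most polynomial in $\max_i\abs{I_i}$), and your bound via deleting the $c$ connective vertices is sound. Your variant is arguably the cleaner of the two; as you note, it also adapts immediately to the clique-cover extension since the rainbow prefix is a path. The only points to make explicit in a final write-up are the degenerate cases ($c=1$, $s_j=t_j$, disconnected terminals), which you already sketch via the junk class.
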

\begin{proof}
	We prove the statement by showing that \fsp{} OR-cross-composes into \fsp{} parameterized by treedepth.
	We first define the polynomial equivalence relation $\mathcal R$.
	Two instances~${(G=(V,E), \chi \colon V \to \oneto{c}, s, t)}$ and~${(G'=(V',E'), \chi' \colon V' \to \oneto{c'}, s', t')}$ of \fsp{} are~$\mathcal R$-equivalent if and only if~$c = c'$ and ${\dist_G(s, t) = \dist_{G'}(s', t')}$.
	Clearly, the relation fulfills condition (i) of an equivalence relation (see \cref{sec:prelim}).
	Since both integers considered for equivalence are upper-bounded in the size of their respective instance, condition (ii) holds as well.
	Consider~$q$~instances~$I_1,I_2,\ldots,I_q$ of \fsp, where~$I_i \defeq (G_i, \chi_i, s_i, t_i)$ for all~$i \in \oneto{q}$ and all instances are from the same equivalence class of~$\mathcal R$.
	Let~$\ell \defeq \dist_{G_i}(s_i, t_i)+1$ be the number of vertices in each shortest path in each graph~$G_i$, and let~$c$ be the number of colors used in each instance~$I_i$.
	Further, let~$x \defeq \lfloor \nicefrac{\ell}{c} \rfloor$ and~$x' \defeq \lceil \nicefrac{\ell}{c}\rceil$ be the minimum and maximum number of appearances of a color in a \fair{} path in $I_i$.
	We OR-cross-compose into one instance~$I \defeq (G, \chi, s, t)$ of \fsp{} parameterized by the treedepth of~$G$.

	The graph $G$ contains for each~$i \in \oneto{p}$ the graph~$G_i$ as an induced subgraph.
	We make the terminal~$s$ adjacent to each~$s_i$.
	Next, we introduce a path on~$\max (x'-1, 1)$ vertices, one endpoint of which is adjacent to each~$t_i$, and the other endpoint of which is~$t$.
	Our coloring~$\chi$ adopts the colorings of~$\chi_i$ for each of the vertices in~$G_i$.
	Terminal~$s$ is assigned color~$c+1$.
	If~$x' \le 1$, then the path consists only of $t$ and we set~$\chi(t) = c+2$.
	Otherwise, we assign color~$c+1$ to each vertex of the path.

	The construction clearly runs in time polynomial in the sum of the input instance sizes.
	The treedepth of $G$ is at most $$x'+ 1 + \max_{i \in \oneto{q}} \text{td}(G_i) \le 2 \max_{i \in \oneto{q}} \abs{I_i},$$ wherein~$\text{td}(G_i)$ is the treedepth of~$G_i$.
	Hence, our construction fulfills property~(i) of \cref{def:or-cross-composition}.
	We next show that it also fulfills property~(ii), that is, $I$ is a \yes-instance if and only if there exists an~$i \in \oneto{p}$ such that~$I_i$ is a \yes-instance.
	To this end, observe that $\dist_{G}(s, t) = \ell + \max(x', 2)$, and that for each~$i \in \oneto{q}$ and each shortest \abpath{s_i}{t_i} $P_i$, the (unique) \stpath{} that contains~$P_i$ is a shortest path in $G$.
	
	Suppose first that one of the input instances is a \yes-instance, that is, one of the graphs~$G_i$ contains a shortest \fair{}~\mbox{\abpath{s_i}{t_i}}~$P_i$.
	Then, each color~$j \in \oneto{c}$ appears $x_j$ times in $P_i$, wherein~${x \le  x_j \le x'}$.
	We show that the \stpath{} $P$ which contains $P_i$ is \fair{} as well.
	If~$x' \le 1$, then~$\ell \leq c$, each color~$j \in \oneto{c}$ is visited at most once, and the only additional vertices are~$s$ and~$t$, which both have unique colors.
	If~$x' > 1$, then there are exactly~$x'$ vertices in~$P$ that are not in~$P_i$ and all of those have color~$c+1$.
	As this color does not appear in~$P_i$, the path~$P$ is \fair.
	Hence, $I$ is a \yes-instance.

	Suppose next that~$I$ is a \yes-instance, that is, there exists a shortest \fair{} \stpath{} $P$.
	Such a path $P$ visits vertices from exactly one of the graphs~$G_i$, let~$P_i$ be the corresponding subpath from~$s_i$ to~$t_i$.
	If $x' \le 1$, then $P$ contains $\ell+2$ vertices and $\chi$ assigns by construction~$c+2$ colors to the vertices in~$G$.
	Hence, $P$ must contain at most one vertex of each color.
	As~$\chi(s) = c+1$ and~$\chi(t) = c+2$, the subpath~$P_i$ must contain each color in~$\oneto{c}$ at most once and it is therefore a shortest \fair{} \abpath{s_i}{t_i} and~$I_i$ is a \yes-instance.
	If~$x' > 1$, then $P$ contains~$\ell+x'$ vertices and~$\chi$ assigns~$c+1$ colors to the vertices in~$G$.
	Since (i) there are exactly $x'$ vertices in~$P$ that are not in~$P_i$ and (ii) $P_i$ consists of exactly $\ell$ vertices whose colors are in~$\oneto{c}$, it must contain each of these colors at least~$x$ and at most~$x'$ times.
	Thus, $P_i$ is a shortest \fair{} \abpath{s_i}{t_i} and~$I_i$ is a \yes-instance.
\end{proof}

We only need a slight modification of the above OR-cross-composition to show that there is presumably no polynomial kernel for \fsp{} when parameterized by minimum clique cover.

\begin{proposition}
	\label{prop:nokernmcc}
	\fsp{} parameterized by minimum clique cover does not admit a polynomial kernel unless~\NPincoNPslashpoly.
\end{proposition}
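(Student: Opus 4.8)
The plan is to reuse the OR-cross-composition from the proof of \cref{prop:nokerntd} essentially verbatim, and then add a single batch of edges that does not change the set of shortest \stpaths{} but collapses the minimum clique cover to something polynomial in the instance size. Concretely, starting from the graph $G$ produced there (with terminals $s$ and $t$, coloring $\chi$), I would add, for every pair of vertices $u,v$ with $\dist_G(s,u) = \dist_G(s,v)$, the edge $\{u,v\}$, obtaining a graph $G'$ on the same vertex set, keeping $\chi$, $s$, and $t$.

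The correctness of the composition is then inherited for free, once we observe that the additional edges are invisible to shortest \stpaths. Assign to each vertex $w$ its \emph{layer} $\dist_G(s,w)$. Every original edge joins vertices whose layers differ by at most one (triangle inequality), and every newly added edge joins vertices in the same layer; hence in $G'$ each edge changes the layer by at most one. Consequently any $s$-$w$-path in $G'$ has length at least $\dist_G(s,w)$, so $\dist_{G'}(s,w) = \dist_G(s,w)$ for all $w$, and in particular $\dist_{G'}(s,t) = \dist_G(s,t)$. A shortest \stpath{} in $G'$ must therefore increase the layer by exactly one at each step, so it cannot traverse a same-layer edge; it is thus a shortest \stpath{} in $G$, and conversely every shortest \stpath{} in $G$ is one in $G'$. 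Since the set of shortest \stpaths{} and their color sequences coincide in $G$ and $G'$, the instance $(G',\chi,s,t)$ is a \yes-instance if and only if $(G,\chi,s,t)$ is, which by \cref{prop:nokerntd} happens if and only if some input instance $I_i$ is a \yes-instance. No property of the treedepth bound is needed---only this equivalence.

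It remains to bound the minimum clique cover of $G'$, the new parameter. By construction each layer $\{w : \dist_G(s,w)=d\}$ induces a clique in $G'$, and the layers partition $V(G')$; this yields a clique cover whose size equals the number of distinct values of $\dist_G(s,\cdot)$, that is, at most one more than the eccentricity of $s$ in $G$. Assuming without loss of generality that each $G_i$ is connected (vertices outside the component of $s_i$ lie on no $s_i$-$t_i$-path and may be discarded during the composition), every vertex of $G_i$ is within distance $1 + \operatorname{diam}(G_i)$ of $s$, and the path attaching the $t_i$ to $t$ adds at most $\ell \le \max_i \abs{I_i}$ further layers. Hence the eccentricity of $s$, and therefore the minimum clique cover of $G'$, is bounded by a polynomial in $\max_i \abs{I_i}$ and is \emph{independent of $q$}, so it is polynomially bounded in $\max_i \abs{I_i} + \log q$. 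Thus the construction is an OR-cross-composition, and \fsp{} parameterized by minimum clique cover admits no polynomial kernel unless \NPincoNPslashpoly.

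The step I expect to be the genuine obstacle is finding a modification that lowers the clique cover at all: since the minimum clique cover is at least the independence number, and the disjoint copies $G_1,\dots,G_q$ of the treedepth construction contain an independent set of size $q$, that graph has clique cover $\Omega(q)$ and is hopeless as stated. The whole trick is the observation that edges internal to a BFS layer of $s$ neither shorten any distance nor can be used by any shortest path, so they may be inserted freely to fuse each layer into a clique; verifying this invariance carefully (the ``each edge changes the layer by at most one'' argument above) is the only delicate point, and everything else is inherited directly from \cref{prop:nokerntd}.
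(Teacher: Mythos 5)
Your proposal is correct and follows essentially the same route as the paper: the paper likewise augments the graph from \cref{prop:nokerntd} by turning each BFS layer $V_d = \{v \mid \dist_G(s,v) = d\}$ into a clique, observes that no shortest \stpath{} can use an intra-layer edge, and covers the graph by the cliques on the layers. Your write-up is in fact slightly more careful than the paper's (explicitly verifying distance preservation and bounding the eccentricity of $s$), but the key idea and construction are identical.
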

\begin{proof}
	We extend the construction shown in the proof of \cref{prop:nokerntd} to show that \fsp{} OR-cross-composes into \fsp{} parameterized by minimum clique cover.
	We use the same polynomial equivalence relation.
	Given $q$ instances~$I_1,I_2,\ldots,I_q$ with~$I_i = (G_i, \chi_i, s_i, t_i)$ for all~$i \in \oneto{q}$ from the same equivalence class in~$\mathcal R$, let~$I = (G=(V,E), \chi, s, t)$ be the instance constructed in the proof of \cref{prop:nokerntd}.
	We extend the construction as follows.
	Let~$\ell = \dist_G(s, t)$.
	For each~$d \in \oneto{\ell}$, let~$V_d \defeq \{v \in V \mid \dist_G(s, v) = d\}$ be the set of vertices with distance exactly~$d$ from~$s$.
	For each~$V_d$, we add an edge between each pair of vertices in~$V_d$.
	Let~$E_d$ be the set of these edges for each~$d \in \oneto{\abs{V}}$.

	Clearly, the composition can still be computed in time polynomial in the sum of the input instance sizes.
	As for property~(i) of \cref{def:or-cross-composition}, note that~$G$ now contains a minimum clique cover of size~$\ell$, consisting of the cliques on~$V_d$ for each~$d \in \oneto{\ell}$ (and~$s$ being part of the clique on~$V_1$).
	This is trivially linearly upper-bounded in~$\max_{i \in \oneto{q}} \abs{I_i}$.
	Lastly, note that each \stpaths{} or \abpaths{s_i}{t_i} for some~$i \in \oneto{q}$ containing an edge from a set~$E_d$ is not a shortest path.
	Hence, our extension does not affect any of the shortest paths of interest, and our construction fulfills property (ii) of \cref{def:or-cross-composition}.
\end{proof}

Concluding this section, we investigate the parameter maximum leaf number.
We show that \fsp{} does not admit a polynomial kernel in the maximum leaf number by providing a polynomial parameter transformation from \prob{Exact Hitting Set}.
\prob{Exact Hitting Set} admits no problem kernel of size polynomial in~$\abs{U}$ unless \NPincoNPslashpoly~\cite{DomLS14}.\footnote{We mention that the authors only exclude a polynomial kernel for \textsc{Hitting Set} parameterized by~$|U|$, but all of the arguments work exactly the same for \textsc{Exact Hitting Set}.}
It is defined as follows.

\problemdef{\prob{Exact Hitting Set}}
{A universe~$U$ and a family~$\calF$ of subsets of~$U$.}
{Is there a subset~$X \subseteq U$ such that $|S \cap X| = 1$ for each~${S \in \calF}$?}

\begin{proposition}
	\label{prop:pkfen}
	\fsp{} parameterized by the maximum leaf number does not admit a polynomial kernel unless~\NPincoNPslashpoly.
\end{proposition}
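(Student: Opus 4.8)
The plan is to give a polynomial parameter transformation from \prob{Exact Hitting Set} parameterized by the universe size~$\abs{U}$, which (as noted above) admits no polynomial kernel unless \NPincoNPslashpoly. So, given an instance~$(U,\calF)$ with~$U=\{u_1,\dots,u_n\}$, I would build an instance of \fsp{} whose maximum leaf number is bounded by a polynomial in~$n=\abs{U}$ and which is a \yes-instance if and only if~$(U,\calF)$ is. The governing constraint is that the number~$\abs{\calF}$ of sets may be superpolynomial in~$n$ (otherwise the whole instance would already be a kernel of size polynomial in~$\abs{U}$), so I cannot afford a separate branching gadget per set. Since subdividing edges does not increase the maximum leaf number, the idea is to keep the number of \emph{branch vertices} (vertices of degree at least three) polynomial in~$n$, and to encode the sets only along long, internally degree-two subdivision paths, using the vertex coloring.

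Concretely, I would lay out a spine~$s=a_0,a_1,\dots,a_n,\dots,t$ and, for each element~$u_i$, insert between~$a_{i-1}$ and~$a_i$ a \emph{choice gadget} consisting of two internally disjoint paths of equal length: a ``take'' path and a ``skip'' path. Only~$\bigO(n)$ such gadgets exist, so the graph has only~$\bigO(n)$ branch vertices. I would use one color per set in~$\calF$ and place on the ``take'' path of~$u_i$ exactly one vertex of the color of~$S_j$ for every set~$S_j$ containing~$u_i$, padding both paths with dummy-colored vertices so that they have the same length. With this coloring, for a selection~$X\subseteq U$ (the elements whose ``take'' path is used), every~\stpath{} that picks one path per gadget is a shortest path of the same length, and the number of its vertices having the color of~$S_j$ equals~$\abs{S_j\cap X}$. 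Hence the only remaining filter between such shortest paths is the fairness condition.

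The heart of the argument is to design the gadget lengths and an auxiliary family of dummy/terminal colors so that, on a shortest path, fairness over the set-colors is \emph{equivalent} to every set-color occurring exactly once, i.e. to~$X$ being an exact hitting set. Fairness already yields the ``at most one'' direction by an averaging argument: once the number of set-colored vertices is anchored to~$\abs{\calF}$ and there are~$\abs{\calF}$ set-colors, the fact that the counts are nonnegative integers summing to~$\abs{\calF}$ and differing pairwise by at most one forces every count to equal one. The work is to simultaneously \emph{anchor} this total and force coverage (every set hit) without per-set structure, while keeping the dummy and terminal colors balanced so that they themselves do not violate fairness; I would do this by calibrating the gadget lengths and the number of dummy color classes so that the shortest-path length is attained only when the set-colored vertices number exactly~$\abs{\calF}$ and are spread across all set-colors.

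The step I expect to be the main obstacle is exactly this calibration: turning the purely \emph{local}, per-element ``take/skip'' choices into the \emph{global}, per-set ``exactly one'' condition using only the fairness predicate and the shortest-path length, given that~$\sum_{u_i\in X}\abs{\{j : u_i\in S_j\}}$ depends on~$X$ and so does not obviously anchor to a fixed value (the difficulty is precisely ruling out the under-covering selections that fairness alone would accept). Once the equivalence ``fair shortest~\stpath{}'' $\Leftrightarrow$ ``exact hitting set'' is established in both directions, the remaining obligations are routine: the transformation clearly runs in polynomial time, and after suppressing degree-two vertices the graph is a chain of~$\bigO(n)$ constant-size gadgets, so its maximum leaf number is~$\bigO(n)$. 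Since the parameter is thus polynomially (indeed linearly) bounded in~$\abs{U}$, this is a polynomial parameter transformation, and the claimed kernel lower bound for \fsp{} parameterized by the maximum leaf number follows.
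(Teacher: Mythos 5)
Your overall architecture coincides with the paper's: a polynomial parameter transformation from \prob{Exact Hitting Set} parameterized by $\abs{U}$, with one two-track (``take''/``skip'') gadget per element, one color per set placed on the take-track of each element that the set contains, and a long spine path so that only $\bigO(\abs{U})$ branch vertices exist and the maximum leaf number stays linear in $\abs{U}$. However, you explicitly leave open the one step that carries all the weight---forcing, via the fairness predicate alone, that every set-color is collected at least once and that the total number of collected set-colored vertices is pinned to $\abs{\calF}$---and this is a genuine gap rather than routine calibration. Without it the reduction fails: a selection $X$ that misses some sets entirely can still yield a fair shortest \stpath{} unless the auxiliary colors are tuned precisely against that outcome, and your averaging argument only kicks in \emph{after} the sum $\sum_{S\in\calF}\abs{S\cap X}$ has been anchored, which is the very thing you have not done.

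The paper closes this gap with three concrete devices. First, two anchor colors are placed so that every shortest \stpath{} contains exactly $a$ vertices of one and exactly $a+1$ of the other (with $a$ chosen as $\sigma-\abs{\calF}$, where $\sigma=\sum_{S\in\calF}\abs{S}$); fairness then forces \emph{every} color to occur between $a$ and $a+1$ times on any solution. Second, the spine carries $a-1$ vertices of each set-color $p_S$, so the window $[a,a+1]$ translates into ``$p_S$ occurs at least once in the gadgets,'' i.e., $X$ must be a hitting set. Third---and this is the ingredient your sketch is missing---all skip-track vertices share a single color $q$, so the number of $q$-vertices on the path equals $\sigma-\sum_{S\in\calF}\abs{S\cap X}$; requiring this to be at least $a$ forces $\sum_{S}\abs{S\cap X}\le\abs{\calF}$, which together with coverage forces $\abs{S\cap X}=1$ for every $S$. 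In other words, the anchoring you were looking for is obtained not on the take-tracks directly but on their complement, whose total size is fixed at $\sigma$. With these three ingredients added, your construction becomes essentially the paper's proof; without them, the stated equivalence is not established.
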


\begin{proof}
	Let~$(U,\calF)$ be an instance of \prob{Exact Hitting Set}.
	We provide a polynomial parameter transformation from \prob{Exact Hitting Set} with parameter~$d \defeq \abs{U}$.
	We construct an instance~$I = (G=(V,E),\chi,s,t)$ of \fsp{} as follows.
	We will use~$\abs{\calF} + 4$ colors:
	a color~$p_S$ for each set~$S \in \calF$, and three additional colors~$r_\ell$, $r_h$, and~$q$.  
	The graph~$G$ consists of a path~$G_P$, a gadget~$G_x$ for each element~$x \in U$, and the vertex~$t$ with color~$r_h$.
	For an element~$x \in U$, let~$s_x$ be the number of sets in~$\calF$ containing~$x$ and let~$\calF_x = \{S_1^x,S_2^x,\ldots S_{s_x}^x\}$ be the set of sets in~$\calF$ that contain~$x$.
	The gadget~$G_x$ consists of two disjoint paths~$Y_x = (y_1,y_2,\ldots,y_{s_x})$ and~$N_x=(n_1,n_2,\ldots,n_{s_x})$,
	wherein each vertex~$y_i$ with~$i \in \oneto{s_x}$ receives color~$p_{S_i^x}$ and each vertex~$n_j$ with~$j \in \oneto{s_x}$ receives color~$q$.
	Let~$a = (\sum_{S \in \calF}|S|)-d$.
	The path~$G_P$ starts with $a$~vertices of color~$r_\ell$ and~$a$ vertices of color~$r_h$, the first of which is the terminal~$s$.
	It then proceeds with~$a-1$ vertices of each color~$p_S$ for each~$S \in \calF$.
	We then arrange all gadgets~$G_x$ in a line and connect each last vertex of the two paths of a gadget with each first vertex of the two paths of the next gadget.
	Moreover, we connect the last vertex of~$G_P$ with the first vertices in the two paths of the first gadget and the last vertices in the two paths of the last gadget with~$t$.
	
	Note that for each induced path in the resulting graph, there can be at most two leaves in any spanning tree.
	Since all edges can be partitioned into~$2d+1$~paths (two in the gadget for each element~$x\in U$ and one is~$G_P$), the maximum leaf number of the resulting instance is at most~$4d+2$ (which is polynomial in~$d$).
	Moreover, since the reduction can clearly be computed in polynomial time, it only remains to show that the constructed instance of \fsp{} is a \yes-instance if and only if the original instance of \prob{Exact Hitting Set} is a \yes-instance.
	To this end, observe that each shortest~\stpath{} contains all vertices in~$G_P$, the vertex~$t$, and either all vertices of~$Y_x$ or all vertices of~$N_x$ for each~$x \in U$.
	Hence, each \stpath{} contains exactly~$a$ vertices of color~$r_\ell$ and~$a+1$ vertices of color~$r_h$.
	Thus, every solution path must contain between $a$~and~$a+1$ vertices of each color.
	
	Assume first that there is an exact hitting set~$X \subseteq U$.
	Consider the~\stpath~$P$ in~$G$ that contains all vertices in~$G_P$ and, for each gadget~$G_x$, the vertices in~$Y_x$ if~$x \in X$ and the vertices in~$N_x$ if~$x \notin X$.
	Since~$X$ is an exact hitting set, the path contains exactly one vertex of color~$p_S$ for each set~$S \in \calF$.
	Since the path~$G_P$ contains~$a-1$ vertices of each such color, each such color appears exactly~$a$ times in~$P$.
	Finally, note that the number of vertices with color~$q$ in~$P$ is also~$a$ as~$P$ contains all~$\sum_{S \in \calF}|S| = a + d$ vertices in all paths~$N_x$ except for the exactly~$d$ vertices where it contains vertices in the~$Y_x$-paths.
	Thus, the path~$P$ verifies that $I$ is a \yes-instance.
	
	Now assume that~$I$ is a \yes-instance.
	Then, there is a path~$P$ containing all vertices in~$G_P$, the vertices of~$Y_x$ or the vertices of~$N_x$ for each~${x\in U}$, and the vertex~$t$, such that each color appears between~$a$ and~$a+1$ times in~$P$.
	Consider the set~$X \subseteq U$ containing an element~${x \in U}$ if and only if~$P$ contains the vertices of~$Y_x$.
	Since~$P$ contains each color at least~$a$ times, it contains at least one vertex of each color~$p_S$ in some subpath~$Y_x$, that is,~$X$ is a hitting set.
	Moreover, assume that it holds for some set~$S \in \calF$ that~$|S \cap X| > 1$.
	Then,~$P$ contains at most~$\sum_{S \in \calF} |S| - d - 1 = a-1$ vertices of color~$q$, a contradiction.
	Thus, $X$ is an exact cover.
	This concludes the proof.
\end{proof}

\section{Para-NP-Hardness}
\label{sec:paraNP}
We complete our tetrachotomy by showing para-NP-hardness for several parameters.
We start by showing that \fsp{} is \NP-hard on interval graphs by providing a reduction from \prob{Vertex Cover}.
Afterwards, we conclude with a reduction from \textsc{Exact Cover} showing \NP-hardness for bipartite outerplanar graphs with genus zero and bandwidth two.
\textsc{Vertex Cover} is \NP-hard and defined as follows~\cite{Kar72}.

\problemdef{\prob{Vertex Cover}}
{A graph~$G = (V, E)$ and an integer~$k$.}
{Is there a set~$K \subseteq V$ with~$\abs{K}\le k$ such that~$G - K$ is edgeless?}

\begin{proposition}
	\label{prop:interval}
	\fsp{} is \NP-hard even on interval graphs.
\end{proposition}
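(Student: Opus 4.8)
The plan is to give a polynomial-time many-one reduction from \prob{Vertex Cover}. Given an instance $(H=(V_H,E_H),k)$ I would build an interval graph $G$ together with a coloring $\chi$ and terminals $s,t$ so that the shortest $s$-$t$-paths are in one-to-one correspondence with the ways of marking a subset $S\subseteq V_H$, and so that the balance condition holds exactly for those $S$ that are vertex covers of the right size. I would first invoke the elementary fact that $H$ has a vertex cover of size at most $k$ if and only if it has one of size exactly $k$ (for $k\le\abs{V_H}$ one pads any smaller cover with arbitrary extra vertices). This step matters because the balance condition $\max_{i}\abs{\chi_P^i}-\min_{i}\abs{\chi_P^i}\le 1$ can \emph{pin} a count to a fixed target but cannot by itself express a ``$\le k$'' inequality, so it is cleaner to force a cover of a fixed size.

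The skeleton of $G$ would be a chain of cliques $\Lambda_0,\Lambda_1,\dots,\Lambda_L$ in which consecutive cliques are completely joined and non-consecutive ones are non-adjacent; using intervals of the shape $[\,j,\,j+\tfrac32\,]$ for clique $\Lambda_j$ this is manifestly an interval graph, the distance from $\Lambda_0$ to $\Lambda_L$ is exactly $L$, and every shortest $s$-$t$-path picks exactly one vertex from each $\Lambda_j$. Over this skeleton I would place one binary-choice gadget per vertex $v\in V_H$ in which the \emph{include} and the \emph{exclude} option contribute the \emph{same} number of vertices to the path, so that the shortest-path length $\ell=L$—and hence the total number of path vertices, i.e. the total color mass—is independent of the chosen $S$. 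The colors would be arranged so that (i) a dedicated selection color is collected once for each vertex put into $S$, making its count equal $\abs{S}$; (ii) for each edge $e$ a color $c_e$ is collected whenever an endpoint of $e$ is included, contributing a fixed amount when $e$ is covered and nothing when it is not; and (iii) a pool of \emph{forced} padding vertices raises every color to a common target $T$. Calibrating the padding so that all colors reach $\{T-1,T\}$ precisely when $\abs{S}=k$ and every $c_e$ is collected would make the balance condition equivalent to ``$S$ is a size-$k$ vertex cover''; in particular an uncovered edge must drop its color at least two below the pinned target and thereby break balance.

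The correctness proof then splits as usual. For the forward direction I would take a size-$k$ cover $S$, read off the corresponding shortest $s$-$t$-path, and check that each color lands in its admissible range: the edge colors because $S$ meets every edge, the selection color because $\abs{S}=k$, and the padding colors by construction; hence the path is \fair. For the reverse direction I would take a \fair{} shortest path, let $S$ be the marked set, and argue that feasibility of the balance condition forces $S$ to hit every edge—otherwise some $c_e$ is collected zero times while a pinned color sits at $T$, giving a gap of at least two—and to have size exactly $k$ from the pinned selection color.

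The step I expect to be the main obstacle is realizing the length-preserving binary-choice gadgets \emph{inside} the interval graph class. The tempting implementation—two internally vertex-disjoint routes of equal length between two junctions, one carrying the \emph{include} colors and one the \emph{exclude} colors—creates an induced cycle of length at least four, which is not chordal and therefore not an interval graph; this is exactly the device available for the (non-chordal) maximum-leaf-number construction but it is forbidden here. I would therefore have to implement each choice purely from overlapping intervals and clique-layers while \emph{simultaneously} keeping every shortest $s$-$t$-path of the intended ``one choice per gadget'' shape, keeping the shortest-path length independent of $S$, and steering the per-edge color contributions so that a covered edge contributes a fixed amount regardless of whether one or both endpoints are included. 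Reconciling these three requirements within the rigidity of interval representations—rather than the color arithmetic, which is routine once the geometry is fixed—is where the real difficulty lies.
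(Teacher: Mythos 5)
Your high-level plan---reduce from \prob{Vertex Cover}, normalize to covers of exact size $k$, keep the shortest-path length independent of the selected set, and pad every color up to a common pinned target so that the balance condition encodes ``$S$ is a size-$k$ cover''---is exactly the strategy of the paper's proof. However, you correctly identify the crux (a length-preserving binary-choice gadget that lives inside the interval-graph class) and then explicitly leave it unresolved. That is a genuine gap: without a concrete gadget there is no reduction whose correctness can be checked, and the color calibration you call ``routine'' cannot even be set up.

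For the record, the paper closes this gap as follows. The gadget for a vertex $v$ consists of two parallel length-$m$ tracks $U_v=(u_1,\dots,u_{m+1})$ and $L_v=(\ell_1,\dots,\ell_{m+1})$ between junction vertices $s_v,t_v$, \emph{plus} the chords $\{\ell_i,u_i\}$ and $\{\ell_i,u_{i+1}\}$. These chords triangulate every would-be induced cycle, so the gadget (and the whole construction, which strings all gadgets along a line) is an interval graph, while the distance from $s_v$ to $t_v$ is unchanged. The binary choice is then enforced not structurally (a shortest path \emph{can} hop between the tracks) but by the coloring together with the distance layering: all of $U_v$ has color $p_v$, the lower track has a filler color $q_1$ except for its last vertex $\ell_{m+1}$, which has a special color $q_2$ reachable on a shortest path only via the entire lower track. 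A global count of the $q_2$-vertices forces exactly $k$ gadgets to traverse their full lower track, and the resulting budget of $q_1$-vertices then forbids any partial lower-track usage elsewhere, so every other gadget must take its monochromatic upper track. Edges are handled by diamonds ($K_4$ minus an edge, hence interval), in which the path must collect the color $p_u$ or $p_v$ of one endpoint, and a chain-of-cliques filler gadget absorbs the resulting per-vertex surpluses; this replaces your per-edge colors $c_e$, whose counts would be awkward to pin since an edge may have one or two endpoints in the cover. If you want to complete the proof along your own lines, you still need to supply such a gadget; the chorded ladder is essentially the canonical way to simulate a two-route choice inside a chordal/interval structure.
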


\begin{proof}
	Let~$(H=(V,E),k)$ be an instance of \prob{Vertex Cover}.
	Moreover, let~$V = \{v_1, v_2 \dots, v_n\}$ and~$E = \{e_1, e_2 \dots, e_m\}$.
	We construct an interval graph~$G=(V',E')$ with two vertices~$s,t$ and a coloring function~${\chi \colon V' \rightarrow \oneto{n+3}}$ such that there is a \fair{} shortest~$s$-$t$-path in~$G$ if and only if there is a vertex cover of size~$k$ in~$H$.
	We will use one color~$p_v$ for each vertex~$v \in V$ and three additional colors~$q_1$, $q_2$, and~$r$.
	The graph~$G$ consists of a path~$G_P$ and three types of gadgets: a vertex gadget~$G_v$ for each vertex~$v \in V$, an edge gadget~$G_e$ for each edge~$e \in E$, and a filler gadget~$G_f$.
	We next describe the different parts in more detail.
	
	The vertex gadget~$G_v$ consists of two parallel paths ${U_v = (u_1,u_2,\ldots,u_{m+1})}$ and~$L_v=(\ell_1,\ell_2,\ldots,\ell_{m+1})$ of length~$m$.
	Additionally, there is an edge between~$\ell_i$ and~$u_i$ for each~$i \in \oneto{m+1}$ and an edge between~$\ell_i$ and~$u_{i+1}$ for each~$i \in \oneto{m}$.
	Lastly, there are two vertices~$s_v$ and~$t_v$ of color~$r$ where~$s$ is adjacent to~$u_1$ and~$\ell_1$ and~$t$ is adjacent to~$u_{m+1}$ and~$\ell_{m+1}$.
	Each vertex in~$U_v$ gets color~$p_v$, vertex~$\ell_{m+1}$ gets color~$q_2$, and all other vertices in~$L_v$ get color~$q_1$.
	See \cref{fig:intervalvertex} for an example.
	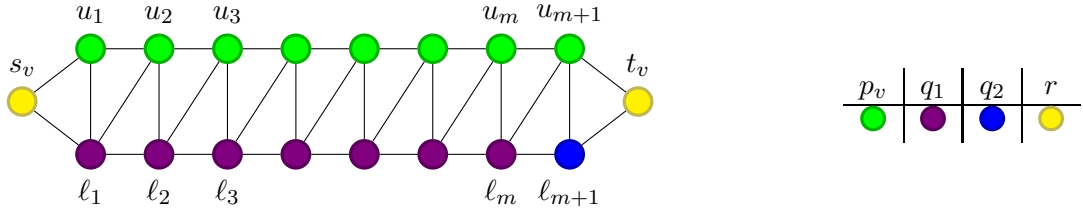
\begin{figure}
	\centering
	\begin{minipage}[c]{0.72\textwidth}
	\begin{tikzpicture}[xscale=.9,yscale=.7]
		\node[colornode=yellow, label=$s_v$] at (-4.5,0) (s) {};
		\node[colornode=yellow, label=$t_v$] at(4.5,0) (t) {};
                                                      ,
		\node[colornode=violet, label=below:$\ell_1$] at (-3.5,-1) (p11) {} edge(s);
		\node[colornode=violet, label=below:$\ell_2$] at (-2.5,-1) (p12) {} edge(p11);
		\node[colornode=violet, label=below:$\ell_3$] at (-1.5,-1) (p13) {} edge(p12);
		\node[colornode=violet] at (-.5,-1) (p14) {} edge(p13);
		\node[colornode=violet] at (.5,-1) (p15) {} edge(p14);
		\node[colornode=violet] at (1.5,-1) (p16) {} edge(p15);
		\node[colornode=violet, label=below:$\ell_m$] at (2.5,-1) (p17) {} edge(p16);
		\node[colornode=blue, label=below:$\ell_{m+1}$] at (3.5,-1) (p18) {} edge(p17) edge(t);
		
		\node[colornode=green, label=$u_1$] at (-3.5,1) (p21) {} edge(s) edge(p11);
		\node[colornode=green, label=$u_2$] at (-2.5,1) (p22) {} edge(p21) edge(p11) edge(p12);
		\node[colornode=green, label=$u_3$] at (-1.5,1) (p23) {} edge(p22) edge(p12) edge(p13);
		\node[colornode=green] at (-.5,1) (p24) {} edge(p23) edge(p13) edge(p14);
		\node[colornode=green] at (.5,1) (p25) {} edge(p24) edge(p14) edge(p15);
		\node[colornode=green] at (1.5,1) (p26) {} edge(p25) edge(p15) edge(p16);
		\node[colornode=green, label=$u_m$] at (2.5,1) (p27) {} edge(p26) edge(p16) edge(p17);
		\node[colornode=green, label=$u_{m+1}$] at (3.5,1) (p28) {} edge(p27) edge(p17) edge(p18) edge(t);
	\end{tikzpicture}%
	\end{minipage}
	\hfill
	\begin{minipage}[c]{0.26\textwidth}
	\begin{tabular}{c|c|c|c}
		$p_v$ & $q_1$ & $q_2$ & $r$\\\hline
		\tikz{\node[colornode=green, inner sep=3pt] at (0,0) (s) {};}&
		\tikz{\node[colornode=violet,  inner sep=3pt] at (0,0) (s) {};}&
		\tikz{\node[colornode=blue,inner sep=3pt] at (0,0) (s) {};}&
		\tikz{\node[colornode=yellow,inner sep=3pt] at (0,0) (s) {};}
	\end{tabular}
	\end{minipage}
	\caption{The vertex gadget~$G_v$ for a vertex~$v$ and a legend providing the names of each color.}
	\label{fig:intervalvertex}
	\end{figure}%
	The gadget for an edge~$e=\{u,v\}$ is a diamond where the two vertices~$s_e$ and~$t_e$ of degree two have color~$r$ and the other two vertices have colors~$p_u$ and~$p_v$, respectively.
	See \cref{fig:intervaledge} for an illustration.
	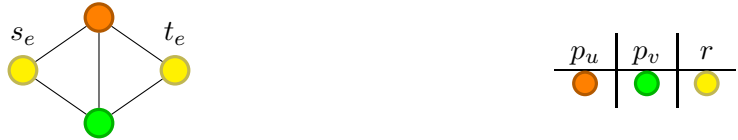
\begin{figure}[t]
	\centering
	\begin{minipage}[c]{0.7\textwidth}\begin{center}
	\begin{tikzpicture}[yscale=.7]
		\node[colornode=yellow, label=$s_e$] at (-1,0) (s) {};
		\node[colornode=yellow, label=$t_e$] at(1,0) (t) {};
		
		\node[colornode=orange] at (0,1) (u) {} edge(s) edge(t);
		\node[colornode=green] at (0,-1) (v) {} edge(s) edge(t) edge (u);
	\end{tikzpicture}%
	\end{center}\end{minipage}
	\hfill
	\begin{minipage}[c]{0.25\textwidth}
	\begin{tabular}{c|c|c}
		$p_u$ & $p_v$ & $r$\\\hline
		\tikz{\node[colornode=orange,inner sep=3pt] at (0,0) (s) {};}&
		\tikz{\node[colornode=green,inner sep=3pt] at (0,0) (s) {};}&
		\tikz{\node[colornode=yellow,inner sep=3pt] at (0,0) (s) {};}
	\end{tabular}
	\end{minipage}
	\caption{The vertex gadget~$G_e$ for an edge~$e = \{u,v\}$ and a legend providing color names.}
	\label{fig:intervaledge}
	\end{figure}
	
	The filler gadget~$G_f$ consists of~$k \cdot (m+1) - m$ levels, where each level contains for each vertex~$v\in V$ a vertex of each color~$p_v$.
	Each level induces a clique and all vertices of two consecutive levels are pairwise adjacent.
	\cref{fig:intervalfiller} illustrates~$G_f$. 
	\begin{figure}[t]
	\centering
	\begin{tikzpicture}[yscale=.7]
		\node[colornode=orange] at (-3,2) (o1) {};
		\node[colornode=green] at (-3,1) (o2) {} edge(o1);
		\node[colornode=cyan] at (-3,0) (o3) {} edge[bend left=22](o1) edge(o2);
		\node[colornode=brown] at (-3,-1) (o4) {} edge[bend right=22](o1) edge[bend left=22](o2) edge(o3);
		\node[colornode=r0] at (-3,-2) (o5) {} edge[bend left=30](o1) edge[bend right=22](o2) edge[bend left=22](o3) edge(o4);

		\node[colornode=orange] at (0,2) (p1) {} edge(o1) edge(o2) edge(o3) edge(o4) edge(o5);
		\node[colornode=green] at (0,1) (p2) {} edge(p1) edge(o1) edge(o2) edge(o3) edge(o4) edge(o5);
		\node[colornode=cyan] at (0,0) (p3) {} edge[bend left=22](p1) edge(p2) edge(o1) edge(o2) edge(o3) edge(o4) edge(o5);
		\node[colornode=brown] at (0,-1) (p4) {} edge[bend right=22](p1) edge[bend left=22](p2) edge(p3) edge(o1) edge(o2) edge(o3) edge(o4) edge(o5);
		\node[colornode=r0] at (0,-2) (p5) {} edge[bend left=30](p1) edge[bend right=22](p2) edge[bend left=22](p3) edge(p4) edge(o1) edge(o2) edge(o3) edge(o4) edge(o5);
		
		\node[colornode=orange] at (3,2) (q1) {} edge(p1) edge(p2) edge(p3) edge(p4) edge(p5);
		\node[colornode=green] at (3,1) (q2) {} edge(q1) edge(p1) edge(p2) edge(p3) edge(p4) edge(p5);
		\node[colornode=cyan] at (3,0) (q3) {} edge[bend left=22](q1) edge(q2) edge(p1) edge(p2) edge(p3) edge(p4) edge(p5);
		\node[colornode=brown] at (3,-1) (q4) {} edge[bend right=22](q1) edge[bend left=22](q2) edge(q3) edge(p1) edge(p2) edge(p3) edge(p4) edge(p5);
		\node[colornode=r0] at (3,-2) (q5) {} edge[bend left=30](q1) edge[bend right=22](q2) edge[bend left=22](q3) edge(q4) edge(p1) edge(p2) edge(p3) edge(p4) edge(p5);
	\end{tikzpicture}
	\caption{An extract from the filler gadget~$G_f$.}
	\label{fig:intervalfiller}
	\end{figure}
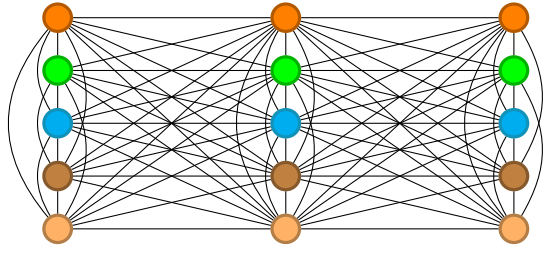
	
	The path~$G_P$ contains~$k m - k$ vertices of colors~$q_2$, $k m - 2(m+n)-1$~vertices of color~$r$, and $k m - (m+1)$ vertices of color~$p_v$ for each vertex~$v\in V$.
	One of the vertices of color~$r$ is~$s$ and this is the first vertex in~$G_P$.
	To finish the construction of~$G$, we add the vertex~$t$ with color~$r$ and connect the gadgets as follows.
	We add an edge between the last vertex of~$G_P$ and the first vertex~$s_{v_1}$ in~$G_{v_1}$ and an edge between~$t_{v_i}$ and~$s_{v_{i+1}}$ for all~$i \in \oneto{n-1}$.
	We then add an edge between~$t_{v_{n}}$ and~$s_{e_1}$ and between~$t_{e_i}$ and~$s_{e_{i+1}}$ for all~$i \in \oneto{m-1}$.
	Finally, we add an edge between~$t_{e_m}$ and each vertex in the first level of~$G_f$ and an edge between each vertex in the last level of~$G_f$ and~$t$.
		
	Note that since each gadget of~$G$ is an interval graph and since we only place them along a line,~$G$ is also an interval graph.
	Moreover,~$G$ can be constructed in polynomial time (since we may assume that~$k \leq n$).
	Thus, it only remains to be shown that~$G$ contains a \fair{} shortest~$s$-$t$-path if and only if~$H$ contains a vertex cover of size~$k$.
	
	Observe that any shortest path through a vertex gadget contains~${m+3}$~vertices, each shortest path through an edge gadget contains three vertices, each shortest path through~$G_f$ contains~$k \cdot (m+1) - m$ vertices, and the number of vertices in the (unique) path through~$G_P$ is
	$$km - k + km - 2(m+1) - 1 + n \cdot (km - (m+1)).$$
	Since the path ends in vertex~$t$ which is not part of any gadget, the number of vertices in each shortest~$s$-$t$-path in~$G$ is
	\begin{align*}
	& n \cdot (m+3) + 3m + k \cdot (m+1) - m \\
	&+ k m - k + k m - 2(m+n) + n \cdot (km - (m+1))\\
	= &\ nm + 3n + 3m + km + k - m + km - k + km - 2m - 2n +nkm -nm - n\\
	= &\ 3km + nkm = (n+3) \cdot km
	\end{align*}
	Since there are~$n+3$ colors, each color has to appear exactly~$k \cdot m$ times.
	
	For the forward direction, assume that there is a vertex cover~$S$ of size~$k$ in~$H$.
	We will construct an~$s$-$t$-path~$P$ in~$G$ which contains exactly~$k\cdot m$ vertices of each color.
	The path~$P$ contains all vertices in~$G_P$.
	For each vertex~$v \in S$, the path~$P$ contains in~$G_v$ all vertices in~$L_v$ and the two vertices~$s_v$ and~$t_v$.
	For each vertex~$u \in V \setminus S$, the path~$P$ contains in~$G_u$ all vertices in~$U_u$ and the two vertices~$s_u$ and~$t_u$.
	For each edge~$e \in E$, let~$v_e \in S$ be a vertex such that~$v_e \in e$.
	Note that~$v_e$ exists since~$S$ is a vertex cover in~$H$ and if both endpoints of~$e$ are contained in~$S$, then we choose an arbitrary endpoint.
	The path~$P$ contains for each edge~$e \in E$ the vertices~$s_e$,~$t_e$, and the vertex of color~$p_{v_e}$ in~$G_e$.
	For each vertex~$v \in S$, let~$x_v$ be the number of vertices of color~$p_v$ in~$P$ in the edge gadgets.
	The path~$P$ contains exactly~$(m+1)-x_v$ vertices of color~$p_v$ in~$G_f$.
	It is easy to verify that~$P$ contains exactly~$k \cdot m$ vertices of colors~$q_1$, $q_2$,~$r$, and of each color~$p_u$, where~$u \in V \setminus S$.
	So consider a color~$p_v$ with~$v \in S$.
	Observe that~$P$ contains~$k \cdot m - (m+1)$ vertices of color~$p_v$ in~$G_P$ and no vertex of that color in~$G_v$.
	Moreover, $P$ contains~$x_v$ vertices of that color in all of the edge gadgets combined and~$m+1 - x_v$ vertices of that color in~$G_f$.
	Hence, $P$~contains exactly~$k\cdot m$ vertices of color~$p_v$ and is therefore \fair.
	
	For the other direction, let~$P$ be a \fair{} shortest~\mbox{$s$-$t$-path} in~$G$.
	Since~$P$ passes through~$G_P$, it holds that $P$~contains exactly~$k$ vertices of color~$q_2$ outside of~$G_P$.
	Note that if~$P$ contains a vertex of color~$q_2$ in a vertex gadget~$G_v$, then it has to contain the whole path~$L_v$.
	Since there are exactly~$k$ vertex gadgets~$G_v$ such that~$P$ contains the whole path~$L_v$, it also holds that~$P$~contains exactly~$m \cdot k$ vertices of color~$q_1$ in these vertex gadgets.
	Hence, $P$~cannot contain any further vertices of color~$q_1$ and it therefore has to contain the whole path~$U_u$ for each other vertex gadget~$G_u$.
	Furthermore, for each vertex~$u \in V$ with~$P$ passing through~$U_u$, there cannot be any vertices of color~$p_u$ outside of~$G_P$ and~$G_u$.
	Hence, in order to pass through each edge gadget~$G_e$, there has to be a vertex~$v \in V$ such that~$P$ passes through~$L_v$ and~$v \in e$.
	Thus, these~$k$ vertices form a vertex cover in~$H$.
	This concludes the proof.
\end{proof}

Concluding this section, we next provide a reduction from \prob{Exact Cover} proving that \fsp{} is \NP-hard even on bipartite cactus graphs with constant bandwidth.
Recall that a graph is a \emph{cactus graph} if it is connected and any two cycles in the graph share at most one vertex, it is bipartite if its vertex set can be partitioned into two independent sets, and its bandwidth is the minimum cost~$\max_{\{u,v\}\in E}|f(u) - f(v)|$ over all injective functions~$f$ from~$V$ to~$\N$.
\prob{Exact Cover} is \NP-complete and defined as follows~\cite{Kar72}.

\problemdef{\prob{Exact Cover}}
{A universe~$U$ and a family~$\calF$ of subsets of~$U$.}
{Is there an \emph{exact cover} in~$\calF$, that is, is there a subfamily~${\calF' \subseteq \calF}$ such that for each~$x \in U$ there is exactly one~$S \in \calF'$ with~$x \in S$?}

\begin{proposition}
	\label{prop:bipartite}
	\fsp{} is \NP-hard even on bipartite cactus graphs with bandwidth two.
\end{proposition}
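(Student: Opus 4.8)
The plan is to adapt the exact-hitting-set construction of \cref{prop:pkfen}, but dualized so that the binary choices are made per set rather than per element, and additionally engineered to be bipartite, cactus, and of bandwidth two. I would reduce from \prob{Exact Cover} with universe~$U$ and family~$\calF$, using one color~$p_x$ per element~$x \in U$, one joker color~$q$, and one filler color~$r$. Intuitively, a solution path decides for every set~$S \in \calF$ whether to include it into~$\calF'$, collecting one vertex of color~$p_x$ for each~$x \in S$ when it does and $|S|$~vertices of color~$q$ when it does not; the fairness requirement then forces every element color to be collected exactly once, i.e.\ each element to be covered exactly once.

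For each set~$S \in \calF$ I would build a \emph{set gadget} consisting of two vertex-disjoint paths~$Y_S$ and~$N_S$, each on~$|S|$ vertices, whose two ends are identified with two cut vertices of color~$r$. This makes the gadget a single cycle of even length~$2|S|+2$. The~$|S|$ vertices of~$Y_S$ receive the colors~$\{p_x \mid x \in S\}$ (one each), and all vertices of~$N_S$ receive color~$q$; since both arcs have the same length, a shortest path through the gadget uses exactly one of them, encoding the choice ``$S \in \calF'$'' versus ``$S \notin \calF'$''. I would then chain the gadgets in a line, identifying the exit cut vertex of one with the entry cut vertex of the next, prepend a filler path~$G_P$ that starts at~$s$ and ends at the first cut vertex, and let~$t$ be the last cut vertex. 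The three structural properties are then immediate: every cycle is a gadget cycle and has even length, so the graph is bipartite; distinct gadget cycles are edge-disjoint and share at most a single cut vertex, so the graph is a cactus; and ordering each gadget as $w_{i-1}, y_1, n_1, y_2, n_2, \dots, y_{|S|}, n_{|S|}, w_i$ (the zig-zag alternating between the two arcs), with~$G_P$ placed first and the gadgets appended in order, yields an ordering in which every edge spans at most two positions, i.e.\ bandwidth two. The same drawing shows the graph is outerplanar, hence of genus zero, matching the further parameters listed in \cref{parameterHierarchy}.

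For correctness I would use a counting argument as in \cref{prop:interval}. Since the graph is a chain of cycles joined at cut vertices with a pendant path, every shortest \stpath{} traverses all of~$G_P$, all cut vertices, and exactly one equal-length arc of each gadget; hence all shortest \stpaths{} have the same number of vertices. I would pick a target~$T$ and give~$G_P$ exactly~$T-1$ vertices of each color~$p_x$, enough vertices of color~$q$, and enough vertices of color~$r$ so that, when every element is covered exactly once, each of the~$|U|+2$ colors occurs exactly~$T$ times; a short calculation then shows the common length of a shortest path equals exactly~$T\cdot(|U|+2)$, which is divisible by the number of colors. Consequently a shortest \stpath{} is \fair{} if and only if every color occurs exactly~$T$ times, which by the padding happens if and only if each color~$p_x$ is collected exactly once from the gadgets, i.e.\ exactly one selected set contains~$x$. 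This yields both directions: an exact cover produces the \fair{} shortest path that picks~$Y_S$ precisely for~$S \in \calF'$, and conversely the selected sets of any \fair{} shortest path form an exact cover.

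The main obstacle is the bandwidth-two requirement, which is far more restrictive than the interval or treedepth structure used in the earlier reductions and at first sight seems incompatible with a gadget that must offer a genuine binary choice while staying bipartite and cactus. The resolution is the ``thin'' two-parallel-path gadget: realizing each choice as a single even cycle keeps the graph bipartite and cactus, and interleaving the two arcs in the vertex ordering keeps every edge within distance two. The remaining work is purely the bookkeeping of color counts, which becomes routine once the divisibility of the path length by~$|U|+2$ is arranged.
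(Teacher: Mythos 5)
Your construction is essentially the paper's own proof: both reduce from \prob{Exact Cover}, realize each set as a single even cycle formed by two equal-length parallel arcs (one carrying the element colors, one carrying only a filler/joker color), chain these gadgets along a color-padding path between $s$ and $t$, obtain bandwidth two by the same zig-zag interleaving of the two arcs, and conclude via the same counting argument that a shortest path is \fair{} iff the selected arcs yield an exact cover. The only deviations are cosmetic --- you introduce a separate joker color $q$ for the non-selection arcs where the paper reuses its single filler color, and you identify consecutive cut vertices rather than joining them by an edge --- and neither affects correctness.
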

\begin{proof}
	We reduce from \prob{Exact Cover}.
	To this end, let~$I = ({U \defeq \oneto{n}}, \calF = \{S_1, S_2, \dots, S_m\})$ be an instance of \prob{Exact Cover} and let~${\sigma \defeq \sum_{S_j \in \calF} \abs{S_j}}$.
	We construct a bipartite cactus graph~$G = (V, E)$ with bandwidth two.
	Moreover, we define a coloring function~${\chi \colon V \to \oneto{n+1}}$.
	The graph~$G$ will contain two designated vertices~$s$ and~$t$ and there is a \fair{} shortest \stpath{} in~$G$ if and only if there is an exact cover in~$I$.
	Herein, the colors~$1, \dots, n$ represent the elements of the universe, and color~$n+1$ is used as a \emph{filler color}.
	The vertices~$s$ and~$t$ receive color~$n+1$.
	Additionally, the graph consists of a \emph{set gadget}~$G_j$ for each set~$S_j \in \calF$ and a path~$G_P$.

	For each~$S_j \in \calF$, let~${p_j = \abs{S_j}}$ and let~$\{x^j_1, x^j_2, \dots, x^j_{p_j}\} \subseteq \oneto{n}$ denote the elements of~$S_j$.
	The set gadget~$G_j$ consists of two vertices~$s_j$ and~$t_j$ and two parallel paths induced by the sets~$A_j \defeq \{a^j_1, a^j_2, \dots, a^j_{p_j}\}$ and~${B_j \defeq \{b^j_1, b^j_2, \dots, b^j_{p_j}\}}$, respectively.
	The endpoints are~$a^j_1$ and~$a^j_{p_j}$ and~$b^j_1$ and~$b^j_{p_j}$, respectively.
	The endpoints~$a^j_1$ and~$b^j_1$ are adjacent to~$s_j$, while~$a^j_{p_j}$ and~$b^j_{p_j}$ are adjacent to~$t_j$.
	The vertices~$s_j$ and~$t_j$ as well as the vertices in~$B_j$ are all colored with the filler color~$n+1$, and vertex~$a^j_q$ is colored with~$x^j_q$ for each~$q\in \oneto{p_j}$.

	The path~$G_P$ consists of~$n \cdot (2m + 1 + \sigma - n)$ vertices.
	For each color~$i \in \oneto{n}$, there are~$2m + 1 + \sigma - n$ vertices of color~$i$ in~$G_P$, and there are no vertices of color~$n+1$.
	We denote by~$V_P$ the set of vertices in~$G_P$.

	Finally, we connect the gadgets with the path and terminals by adding 
	an edge between~$s$ and one endpoint of~$G_P$,
	an edge between the other endpoint of~$G_P$ and~$s_1$,
	an edge between~$t_j$ and~$s_{j+1}$ for all~$j \in \oneto{m}$, 
	and an edge between~$t_m$ and~$t$.

	Observe that~$G$ is a bipartite cactus graph as the only cycles in~$G$ are the different set gadgets.
	These are cycles of even length that do not share vertices with any other cycles.
	We show that the bandwidth of our constructed graph~$G$ is at most~$2$ by constructing an injective function~$\varphi \colon V \to \N$ and proving~${\max_{\{u,v\}\in E}|\varphi(u) - \varphi(v)| \leq 2}$.
	We start by setting~$\varphi(s) = 1$ and assigning consecutive numbers to vertices adjacent in the path induced by~$\{s\} \cup V_P \cup \{s_1\}$.
	Suppose that we have assigned a number to~$s_j$ for some~$j \in \oneto{m}$.
	Then, we set~$\varphi(a^j_q) \defeq \varphi(s_j) + 2q-1$ and~$\varphi(b^j_q) \defeq \varphi(s_j) + 2q$ for each~$q \in \oneto{p_j}$.
	Next, we set~$\varphi(t_j) \defeq \varphi(b^j_p)+1$ and if~$j < m$, then we set~$\varphi(s_{j+1}) = \varphi(t_j)+1$.
	Lastly, we assign~$\varphi(t) = \varphi(t_m)+1$.
	By construction, all edges are between two vertices in a set gadget or their two endpoints are assigned consecutive numbers.
	Note that the endpoints of each edge within a set gadget~$G_j$ are assigned numbers with a difference of at most two (exactly two for all edges except for~$\{s_j,a^j_1\}$ and~$\{b^j_{p_j},t_j\}$).
	
	We next show that there is an exact cover in~$I$ if and only if~$G$ contains a \fair{} shortest \stpath{}.
	To this end, we first make some observations on the properties of any \stpath{} $P$ in~$G$.
	Each such path contains all vertices in
	\[
		C \defeq \{s, t\} \cup V_P \cup \bigcup_{S_j \in \calF} \{\{s_j, t_j\}\}.
	\]
	Note that~$2m+2$ of them are colored with color~$n+1$ and~${2m + 1 + \sigma - n}$ of them are colored with color~$i$ for each~$i \in \oneto{n}$. 
	For each~$j \in \oneto{m}$, the path~$P$ contains either the vertices in~$A_j$ or the vertices in~$B_j$.
	Let~$A_P$ be the set of vertices in~$P$ contained in~$A_j$ for some~$j \in \oneto{m}$ and let~$B_P$ be the set of vertices in~$P$ contained in~$B_j$ for some~$j \in \oneto{m}$.
	Then,~$\abs{A_P} + \abs{B_P} = \sigma$ and~$P$ contains overall~$2m+2 + \abs{B_P}$ vertices of color~$n+1$.
	
	Suppose that there is an exact cover~$\calF'$ in~$I$.
	Let~${A \defeq \bigcup_{S_j \in \calF'} A_j}$
	and let~${B \defeq \bigcup_{S_j \in \calF \setminus \calF'} B_j}$.
	We show that~$P \defeq G[A \cup B \cup C]$ induces a \fair{} shortest \stpath{} in~$G$.
	Since~$\calF'$ is an exact cover, it holds that~${\abs{A} = n}$ and~$A$~contains exactly one vertex of each color~$i \in \oneto{n}$.
	Hence, the number of vertices of each color in~$P$ is exactly~${2m + 2 + \sigma - n}$.
	Since the number of vertices of color~$n+1$ in~$P$ is~$2m+2 + \abs{B} = 2m+2 + \sigma - n$, it follows that~$P$ is \fair.

	For the converse, suppose that~$P$ is a \fair{} shortest \stpath{} in~$G$.
	Again, let~$A_P$ be the set of vertices in~$P$ that are contained in~$A_j$ for some~${j \in \oneto{m}}$ and let~$B_P$ be the set of vertices in~$P$ that are contained in~$B_j$ for some~$j \in \oneto{m}$.
	We claim that~$A_P$ contains exactly one vertex of each color~$i \in \oneto{n}$.
	Assume towards a contradiction that this is not the case.
	Then~$\abs{A_P} < n$,~$\abs{A_P} > n$, or~${\abs{A_P} = n}$ and at least one color appears at least twice in~$A_P$.
	In the first case, there exists a color~$i \in \oneto{n}$ that appears less than~$2m + 2 + \sigma - n$ times in~$P$.
	Recall that~$\abs{A_P} + \abs{B_P} = \sigma$ and therefore~${\abs{B_P} > \sigma - n}$.
	Hence, the color~$n+1$ appears more than $2m + 2 + \sigma - n$~times, a contradiction to~$P$~being \fair.
	In the second case, there exists a color~$i \in \oneto{n}$ that appears more than~${2m + 2 + \sigma - n}$~times in~$P$.
	Since~${\abs{B_P} < \sigma - n}$, the color~$n+1$ appears less than~$2m + 2 + \sigma - n$ times, a contradiction to~$P$ being \fair.
	In the third and final case, there exists a color~$i \in \oneto{n}$ that appears at least twice in~$A_P$.
	Then, there exists by the pigeonhole principle another color~$j \in \oneto{n}$ not appearing in~$A_p$ and the color~$i$ therefore appears at least two more times in~$P$ than color~$j$, a contradiction to~$P$ being \fair.
	Consequently,~${\calF' \defeq \{S_i \in \calF \mid A_i \subseteq A_P\}}$ is an exact cover in~$I$.
\end{proof}

Since cactus graphs are outerplanar, the above result also implies para-\NP-hardness for the parameters genus and distance to outerplanar graphs.

\section{Conclusion}
\label{sec:concl}
Our study shows that injecting \fair{ness} to the problem of finding shortest paths makes the problem computationally hard even when many (structural) graph parameters are small.
This is somewhat surprising as the computation of \fair{} spanning trees remains polynomial time solvable~\cite{brezovec1988matroid,CKLV19}.
A question that arises is how the tractability of other problems change when adding this natural fairness constraint.

In this work, we only focus on one fairness notion.
However, most of our results also hold for slight variations of \fair{ness}, \ie proportionality, giving appearance lower and upper bounds for each color.
Another fairness concept that could be studied with respect to paths is the margin of victory fairness that measures the difference between the most and second-most appearing color~\cite{boehmer2022fairmatching}.

\printbibliography
\end{document}